\pgfplotsset{width=8.6cm, height=6.5cm, compat=1.9}
\let\NAT@parse\undefined
\newtheorem{theorem}{Theorem}
\pgfplotsset{every axis legend/.style={%
    cells={anchor=west},
    inner xsep=3pt,inner ysep=2pt,nodes={inner sep=0.8pt,text depth=0.15em},
    anchor=north east,%
    shape=rectangle,%
    fill=white,%
    draw=black,
    at={(0.98, 0.98)},
    font=\footnotesize,
    }
}
\pgfplotsset{every axis/.append style={line width=0.6pt,tick style={line width=0.8pt}}}
\begin{document}

\title{Multi-cell Content Caching: Optimization for Cost and Information Freshness}

\author{Zhanwei~Yu,~\IEEEmembership{Student Member,~IEEE,}
        Tao~Deng, Yi~Zhao, and~Di~Yuan,~\IEEEmembership{Senior~Member,~IEEE}
\IEEEcompsocitemizethanks{\IEEEcompsocthanksitem Z. Yu, Y. Zhao, and D. Yuan are with the Department of Information Technology, Uppsala University, 751 05 Uppsala, Sweden.\protect\\
	E-mail: \{zhanwei.yu; yi.zhao; di.yuan\}@it.uu.se
\IEEEcompsocthanksitem T. Deng is with School of Information Science and Technology, Southwest Jiaotong University, 610032 Sichuan, China.\protect\\
	E-mail: dengtaoswjtu@foxmail.com}
\thanks{Manuscript submitted May 27, 2022. This article was presented in part at the IEEE Wireless Communications and Networking Conference Workshops (WCNC WKSHPS) , April 2022 \cite{yu2022multi}.}}


\IEEEtitleabstractindextext{%
\begin{abstract}
In multi-access edge computing (MEC) systems, there are multiple local cache servers caching contents to satisfy the users' requests, instead of letting the users download via the remote cloud server. In this paper, a multi-cell content scheduling problem (MCSP) in MEC systems is considered. Taking into account jointly the freshness of the cached contents and the traffic data costs, we study how to schedule content updates along time in a multi-cell setting. Different from single-cell scenarios, a user may have multiple candidate local cache servers, and thus the caching decisions in all cells must be jointly optimized. We first prove that MCSP is NP-hard, then we formulate MCSP using integer linear programming, by which the optimal scheduling can be obtained for small-scale instances. For problem solving of large scenarios, via a mathematical reformulation, we derive a scalable optimization algorithm based on repeated column generation. Our performance evaluation shows the effectiveness of the proposed algorithm in comparison to an off-the-shelf commercial solver and a popularity-based caching.
\end{abstract}

\begin{IEEEkeywords}
Age of information, caching, multi-cell.
\end{IEEEkeywords}}

\maketitle

\IEEEdisplaynontitleabstractindextext

%
\IEEEpeerreviewmaketitle

\IEEEraisesectionheading{\section{Introduction}\label{Sec:Intro}}

For streaming media applications in emerging scenarios, the existing communication systems face significant challenges in delivering ultra-high bandwidth, ultra-large storage, and ultra-low latency. Multi-access edge computing (MEC) is widely accepted as a key technology in this context \cite{8016573}, as it can provide computation-intensive and caching-intensive infrastructure. 

This paper focuses on content caching in multi-cell MEC systems. In each cell, the local cache server downloads contents from the remote cloud server via a backhaul link to provide services for users. Moreover, there are overlapping areas among the cells, thus some users potentially can obtain cached contents from more than one cell. For user experience in terms of the freshness of contents, cache servers need to update cached contents frequently. We use the notion of age of information (AoI), introduced in \cite{kaul2012real}, as the metric of information freshness. We address optimal scheduling of cache updates with respect to the trade-off between the freshness of contents and the cost due to downloading contents.

We consider a finite and time-slotted scheduling horizon. There are two types of limits in cache servers. Downloading and caching in a time slot are subject to backhaul and cache capacities, respectively. A request can be satisfied by a cache if the demanded content is cached, otherwise the content will be downloaded from the cloud to the user, resulting in a higher cost. We use a cost function addressing the trade-off between freshness and download cost. To minimize the overall cost, the updating and caching decisions of cells are coupled such that we need to coordinate the cells instead of optimizing them in isolation. 

Our work consists of the following contributions for the outlined multi-cell content scheduling problem (MCSP).

\begin{itemize}
	\item We provide a complexity analysis for MCSP. Specifically, even for the special case of uniform content size and single time slot, the problem is NP-hard, based on a polynomial reduction from Monotone 3-Satisfiability (M3SAT). 
	\item We provide an integer linear programming (ILP) formulation for MCSP. Using commercial off-the-shelf optimization solvers, the global optimum can be found within an acceptable amount of time for small-scale instances. 
	\item We derive a mathematical reformulation that enables a scalable solution approach. Specifically, in the reformulation, sequences are introduced to represent updating and caching decisions of contents in cache servers over the time slots. 
	\item We propose a column generation algorithm that can efficiently solve the linear programming (LP) version of the reformulation. The algorithm keeps augmenting a small subset of sequences until the optimality condition is satisfied. To recover integrality, we provide a rounding strategy based on disjunctions. The column generation and the rounding strategy are alternately applied to deliver a problem solution. 
	\item Performance evaluation demonstrates the superiority of our proposed algorithm in comparison to a popularity-based caching and a commercial solver. Moreover, the algorithm provides an optimality bound, so the performance can be gauged even if the global optimum is out of reach. Using the bound, our results show the algorithm consistently yields near-to-optimal solutions that are at most $2.4\%$ from the global optimum for 3-cell and 7-cell instances.
\end{itemize}

\section{Related Work}\label{Sec:Works}

Edge content caching has been for example addressed in \cite{bastug2014living, liang2018enhancing, 9296970, vu2018edge}. Leveraging social networks and device-to-device communications, the authors of \cite{bastug2014living} propose a proactive caching paradigm that can improve the spectral efficiency of backhaul as well as the ratio of satisfied users. In \cite{liang2018enhancing}, caching is applied to software-defined mobile networks. The work in \cite{9296970} proposes a distributed delay optimization algorithm for scenarios where multiple unmanned aerial vehicles are equipped with cache to serve users. The authors of \cite{vu2018edge} investigate multi-layer caching where both base station and users can cache the contents. 

The works in \cite{tan2012optimal, deng2018cost, wen2017cache, chen2019collaborative, banerjee2018greedy, Poularakis2020Service, 9139263, 8891760} have studied cache content placement strategies based on popularity. In \cite{tan2012optimal}, the authors provide an optimal content place strategy for large-scale peer-to-peer systems to maximize the utilization of the uplink bandwidth resources. The authors of \cite{deng2018cost} consider the impact of user mobility in device-to-device networks, and propose a mobility-aware multi-user algorithm to minimize the expected average cost per user. The work in \cite{wen2017cache} proposes a strategy for the popular random-heterogeneous-cellular-network model based on Poisson point processes to maximize the hit probability over content placement probabilities. By utilizing network coding, content files are partitioned into multiple coded segments, and the contents can then be stored with distributed caching. The work in \cite{chen2019collaborative} provides a strategy for the distribution of the coded segments among multiple cache servers to reduce the content downloading time. The content placement strategies in \cite{tan2012optimal, deng2018cost, wen2017cache, chen2019collaborative} primarily focus on pushing popular content to the network edge. The strategy proposed in \cite{banerjee2018greedy} jointly optimizes the caches in edge and core networks to maximize the hit ratio. The work in \cite{Poularakis2020Service} studies the joint optimization of cache placement and request routing in a multi-cell MEC system. In \cite{9139263} and \cite{8891760}, reinforcement learning approaches are applied to design cache content placement strategies.

The AoI concept as a metric for information freshness is introduced in \cite{kaul2012real}. The work in \cite{sun2017update} reveals that AoI is significantly different from delay, and the aspect of AoI in many application scenarios has been addressed in \cite{8845089, farazi2019fundamental, 8406914, 8845254, liu2021aion}. In \cite{8845089}, the authors propose an algorithm for optimal link activation to minimize energy consumption subject to AoI constraints in wireless networks with interference. The work in \cite{farazi2019fundamental} studies the AoI of status updates in a general multi-source multi-hop wireless network with the presence of interference, and the emphasis is on performance bounds. The work in \cite{8406914}  considers the joint optimization of node assignment and camera transmission scheduling to minimize AoI in wireless camera networks with fog computing. In \cite{8845254}, the authors study AoI with respect to orthogonal multiple access (OMA) and non-orthogonal multiple access (NOMA), and show that NOMA is not always better than OMA in terms of average AoI. The work in \cite{liu2021aion} investigates the bandwidth minimization problem under AoI constraints in bandwidth-limited Internet of things.
 
Some works have provided caching strategies accounted for AoI. The authors of \cite{kam2017information} and \cite{ahani2020accounting} provide strategies based on both the popularity and the freshness of contents. In \cite{9387141}, a freshness-aware update scheme is proposed for the trade-off between the service delay and the freshness of contents. In \cite{ahani2020optimal}, the authors optimize content scheduling to address AoI and information absence. In \cite{xu2021optimal}, the authors provide a model-free reinforcement learning algorithm to minimize the long-term average cost subject to both AoI and energy consumption constraints. 

The scenarios in \cite{ahani2020accounting, 9387141, ahani2020optimal, xu2021optimal} are different from ours as our paper considers multi-cell caching. To the best of our knowledge, the work in \cite{9199125} is closest to ours as it considers a heterogeneous network where multiple cache servers are placed in cells to serve users. However, there are significant differences. Our system model has overlapping areas among the cells, and the cells are coupled due to such a topology. However, in \cite{9199125}, the cells are disjoint but coupled due to the shared update frequency. In addition, the authors of \cite{9199125} consider minimizing the average AoI of delivered contents while we jointly minimize the AoI and cost. 

\section{System Scenario and Problem Formulation}\label{Sec:SSPF}

\subsection{System Scenario}\label{Sec:SSPF:Subsec:SS}

The MEC scenario consists of a cloud server and $H$ cells as illustrated in Fig. \ref{fig:system_model}. There is a cache server in each cell. To model realistic network topologies, there are overlapping areas among the cells. In this paper, the terms cell and cache server are used interchangeably. Denote by $\mathcal{H}=\{1,2,\dots,H\}$ the set of cache servers, and $C_h$ and $G_h$ the cache capacity and the backhaul capacity of cache server $h$, respectively. The cache servers download contents from the cloud server via backhaul links. The contents subject to caching form a set $\mathcal{I} = \{1, 2, \dots, I\}$, and $s_i$ is the size of content $i$. The caching system targets satisfying content requests by cached contents, to reduce the load of core network and the response latency. 

\input{fig-system_model}

A scheduling horizon consisting of a set of time slots $\mathcal{T} = \{1,2,\dots,T\}$ is considered in this paper\footnote{The time slot duration in our case is at the magnitude of minutes or even an hour \cite{9130754}.}. Denote by $\mathcal{R} = \{1,2,\dots, R\}$ the set of the requests\footnote{The requests can be known via for example a prediction model from the work in \cite{zhang2018using}.}. The requests can be divided into two types:
\begin{enumerate}
	\item Multiple-choice requests (MCRs): The requests are from the users located in the overlapping areas, thus there are multiple candidate cache servers. 
	\item Single-choice requests (SCRs): These are simply the rest of requests that can be satisfied by only one cache server.
\end{enumerate}
We use $\mathcal{R}^{\text{m}}$ and $\mathcal{R}^{\text{s}}$ to represent the sets of MCRs and SCRs, respectively. Obviously, $\mathcal{R} = \mathcal{R}^{\text{m}} \cup \mathcal{R}^{\text{s}}$. Each request $r$ consists of a tuple $(i_r, o_r, d_r, \mathcal{H}_r)$ where: 
\begin{itemize}
	\item $i_r$ is the content of requested by $r$. 
	\item $o_r$ is the time slot of request $r$.
	\item $d_r$ is the time slot that is the deadline of satisfying request $r$.
	\item $\mathcal{H}_r$ is the set of candidate cache server(s) for request $r$.
\end{itemize}

Our problem consists in making scheduling decisions of content caching and updating over the time slots. We use AoI to model the freshness of contents. A content that is just downloaded from the cloud to a cache server has AoI zero, and the AoI increases by one for every time slot as long as the content is still cached but not updated. Moreover, for a request, if the requested content is absent from the candidate cache server(s) by the deadline, the content will be downloaded from the cloud, and the AoI is zero in this case.

\subsection{Optimization Variables}\label{Sec:SSPF:Subsec:OV}

Denote by $x_{hita}$ a binary variable that is one if and only if content $i$ is cached with AoI $a$ in cache server $h$ in time slot $t$. Based on the definition of AoI, the $x$-variables are constrained by the following, that states how AoI evolves over time.
\begin{align}
	& x_{hi(t-1)(a-1)} \geq x_{hita}, \notag\\
	&\qquad\ \ \forall h \in \mathcal{H}, i \in \mathcal{I}, t \in \mathcal{T} \setminus  \{1\}, a \in \{1,2,\dots,t-1\}. \label{ILP_c1}
\end{align}
We use binary variable $z_{hit}$ that equals one if cache server $h$ caches content $i$ in time slot $t$. Then 
\begin{equation}\label{ILP_c3}
	\sum_{a=0}^{t-1}x_{hita}=z_{hit}, \forall h \in \mathcal{H}, i \in \mathcal{I}, t \in \mathcal{T}.
\end{equation}
Binary variable $y_{rha}$ is one if request $r$ is satisfied with AoI $a$ by cache server $h$. A request needs to be responded to at most once by the cache servers, hence we have
\begin{equation}\label{ILP_c4}
	\sum_{h \in \mathcal{H}_r} \sum_{a = 0}^{d_r - 1} y_{rha} \leq 1, \forall r \in \mathcal{R}.\\
\end{equation}
Furthermore, request $r$ can be satisfied with AoI $a$ by cache server $h$ only if there exists content $i_r$ whose AoI is $a$ in time slots $t \in \{o_r, \dots, d_r\}$ of cache server $h$, i.e.,
\begin{equation}\label{ILP_c5}
	y_{rha} \leq \sum^{d_r}_{t = o_r} x_{hi_rta}, \forall r \in \mathcal{R}, h \in \mathcal{H}_r, a \in \{0,1,\dots,d_r-1\}.
\end{equation}

\subsection{Cost Model}\label{Sec:SSPF:Subsec:CM}

From a pure AoI standpoint, a naive and extreme setting is that all the requests are met by the cloud server. However, the costs due to downloading would be very large. We consider the trade-off between the costs due to downloading from the cloud (to cache servers and directly by users for uncached contents) and the performance in AoI. In this paper, we use a joint cost function to address the trade-off, and our cost function consists of three components as follows.

\subsubsection{AoI Cost}
Denote by $f_i(a)$ $(a \in  \mathbb{N})$ a monotonically increasing function to quantify the impact of AoI of content $i$. A specific candidate function $f_i(a)$ can be, for example, $f_i(a) = {\rm e}^{a}$ that is the reciprocal of a utility function with respect to AoI in \cite{5062058}. The total AoI cost for all the requests is expressed as
\begin{align}\label{AAA}
	&A(\boldsymbol{y}) = \notag\\
	&\sum_{r\in \mathcal{R}} \left[ \sum_{h\in \mathcal{H}_r} \sum_{a = 0}^{d_r-1} f_{i_r}(a)y_{rha} \!+\! f_{i_r}(0)\left(1 \!-\! \sum_{h\in \mathcal{H}_r} \sum_{a = 0}^{d_r-1}y_{rha}\right)\right].
\end{align}
The first term in the square brackets is the AoI cost if a request is satisfied by a cache server, and the second term is for the case where the content with AoI zero is directly downloaded from the cloud server. Exactly one of the two cases will occur due to \eqref{ILP_c4}.

\subsubsection{Download Cost}

Denote by $\alpha$ the unit download cost when users download from the cloud server. The total download cost is thus
\begin{equation}\label{DDD}
	D(\boldsymbol{y}) = \alpha \sum_{r\in \mathcal{R}} s_{i_r} (1 -  \sum_{h \in \mathcal{H}_r} \sum_{a = 0}^{d_r-1} y_{rha}),
\end{equation}
where $s_{i_r}$ is the size of content $i_r$. Obviously, for request $r$, a download cost of $\alpha s_{i_r}$ occurs if all the $y$-variables of request $r$ are zero, i.e., the download cost will occur only if the request is not satisfied by cache servers. 

\subsubsection{Update Cost}

We use $\beta$ $(\beta < \alpha)$ to represent the unit cost when cache servers download contents via the backhaul from the cloud server. Because $x_{hit0}$ represents that cache server $h$ downloads content $i$ in time slot $t$, the update cost can be expressed by
\begin{equation}
	U(\boldsymbol{x}) = \beta \sum_{h\in \mathcal{H}} \sum_{i\in \mathcal{I}} \sum_{t\in \mathcal{T}} x_{hit0} s_i.
\end{equation}

\subsection{Problem Complexity}\label{Sec:SSPF:Subsec:HA}
Consider MCSP, subject to the cache capacity and backhaul capacity. We analyze its complexity in Theorem \ref{NPhard}, which implies that no low-complexity and exact algorithm can be expected for MCSP, unless $\mathcal{P} = \mathcal{NP}$.
\begin{theorem}\label{NPhard}
	MCSP is NP-hard even if the contents are of uniform size.
\end{theorem}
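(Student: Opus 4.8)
The plan is to exhibit a polynomial-time reduction from Monotone 3-Satisfiability (M3SAT), which is NP-complete: an instance is a CNF formula $\phi$ over Boolean variables $v_1,\dots,v_n$ in which every clause is a disjunction of exactly three literals that are all positive or all negative, and we ask whether $\phi$ is satisfiable. From $\phi$ I would build an MCSP instance with a single time slot ($T=1$) and all content sizes equal to one, together with a cost threshold $K$, so that $\phi$ is satisfiable if and only if MCSP admits a schedule of cost at most $K$. With $T=1$ every content index $a$ appearing in the model is forced to $0$, so every request either is served with AoI $0$ or triggers a download with AoI $0$, and hence the AoI term $A(\boldsymbol{y})$ in \eqref{AAA} is a fixed constant. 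The instance therefore reduces to choosing, for each cache server $h$, a set of at most $\min\{C_h,G_h\}$ contents to cache in slot $1$, so as to minimize $\beta\cdot(\text{number of cached server--content pairs})+\alpha\cdot(\text{number of requests not served by a cache})$. I would pick $\alpha$ larger than $\beta\,|\mathcal{H}|\,|\mathcal{I}|$, so that in any optimal schedule the number of served requests is first maximized and only afterwards the number of cache downloads is minimized.

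The gadgets would be as follows. For each variable $v_j$ I create two unit-capacity cache servers together with a ``variable content'' $c_j$ and one multiple-choice request that asks for $c_j$ whose two candidate servers are exactly this pair; since $\alpha$ dominates, an optimal schedule caches $c_j$ at exactly one of the two servers (caching at both is strictly wasteful, caching at neither leaves a request unserved), and the server left empty encodes the truth value of $v_j$ — say, empty ``positive'' server $\Leftrightarrow v_j$ true and empty ``negative'' server $\Leftrightarrow v_j$ false. For each positive clause $C=(v_a\vee v_b\vee v_c)$ I create a ``clause content'' $g_C$ and one multiple-choice request for $g_C$ whose three candidate servers are the positive servers of $v_a,v_b,v_c$; for a negative clause the candidates are the corresponding negative servers. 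The intended invariant is that the request of a clause can be served if and only if at least one of its three candidate servers is empty, i.e. if and only if at least one of its literals is satisfied. With $K=\beta(n+m)$, where $m$ is the number of clauses, one direction is short: a satisfying assignment gives a schedule that caches one copy of each $c_j$ (per the assignment) and one copy of each $g_C$ at a server left empty by one of its satisfied literals (modulo the capacity subtlety addressed below), serving all $n+m$ requests at download cost $0$. For the converse, a schedule of cost $\le K$ must, by the choice of $\alpha$, serve all requests using exactly $n+m$ cache downloads, which forces ``exactly one copy of each $c_j$'' (the $c_j$ and $g_C$ being pairwise distinct contents); reading off the truth values then makes every clause satisfied.

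The delicate point — and the one I expect to be the main obstacle — is making the clause gadget a faithful disjunction when a variable occurs in several clauses: a single unit-capacity server can host at most one clause content, so if two clauses are satisfied only through the same literal, one of their requests cannot be served even though both clauses are satisfied, which would spuriously rule out a satisfiable $\phi$. I would handle this by replicating each variable server once per clause-occurrence and chaining the replicas with auxiliary unit-capacity servers and additional forced requests so that all replicas of a variable are compelled to share the same empty/occupied state, or, equivalently, by first reducing to a bounded-occurrence restriction of M3SAT and sizing the server capacities and $K$ accordingly; carefully verifying that this synchronization forces consistency without opening up any new way to serve a clause request is where the real work lies. The remaining obligations — polynomiality of the construction, the claimed structure of optimal schedules given $\alpha$, and the two directions of the equivalence — are then routine.
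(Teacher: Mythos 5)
Your reduction has the right skeleton (M3SAT, a single time slot, uniform sizes, $\alpha$ dominating so that serving requests takes priority), but the clause gadget has a genuine gap that you yourself identify and do not close. The failure mode is structural: in your construction a clause request is served only if its distinct clause content $g_C$ is \emph{physically cached} at one of the candidate servers, so serving a clause request consumes a unit of that server's capacity. A single literal that satisfies $k>1$ clauses corresponds to one empty unit-capacity server, which can host only one of the $k$ clause contents; hence a satisfiable formula can map to an instance where not all clause requests are servable, breaking the forward direction of the equivalence. The proposed repairs (replicating servers per occurrence and ``chaining'' replicas so they share the same empty/occupied state, or passing to a bounded-occurrence variant) are exactly where the difficulty sits, and neither is carried out: it is not clear what request gadget would force two servers to agree on being \emph{empty} (requests reward caching, not abstaining), and bounding occurrences does not by itself help, since a unit-capacity server still holds at most one clause content while raising the capacity destroys the empty/occupied encoding. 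As written, the proof is incomplete.

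The paper's construction avoids this obstacle entirely, and the contrast is instructive. It uses only \emph{two} contents, A and B, one server per variable, and two auxiliary single-choice requests (one for A, one for B) at each server; with unit cache capacity and $\alpha>\beta>0$ this forces every server to cache exactly one of A or B, and that binary choice \emph{is} the truth assignment. A clause with all-negated (resp.\ all non-negated) literals becomes a multiple-choice request for content A (resp.\ B) whose candidate servers are the clause's variables. The crucial point is that the clause request asks for the very content whose presence already encodes the truth value, so it is served ``for free'' by whatever is already cached: one cached copy of A at a server satisfies arbitrarily many clause requests routed there, and no per-clause capacity is consumed. This is the missing idea in your proposal; with it, the threshold argument you sketch (all requests served iff the formula is satisfiable) goes through without any replication or synchronization machinery.
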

\begin{proof}
    The proof is established by a reduction from M3SAT that is NP-complete \cite{M3SAT}. Consider a formula of clauses $\mathcal{F} = \land _{k = 1}^{N} \mathcal{C}_k$, where each clause $\mathcal{C}_k$ has at most three negated or non-negated boolean variables. The M3SAT problem is to determine whether or not there exists an assignment of true/false values to the variables, such that formula $\mathcal{F}$ is true.
    
    Consider an instance of M3SAT with $N$ clauses and $M$ variables, we construct a corresponding MCSP instance with $M$ cache servers, $N$ primary requests, and $2M$ auxiliary requests. We first set some parameters of the network: 
    \begin{enumerate}
    	\item There are only two contents A and B of equal size. Denote the size by $s$.
    	\item There is only one time slot, therefore, the AoI is always zero. Also, the AoI costs are equal for the two contents, i.e., $f_{A}(0) = f_{B}(0) = f$.
    	\item Each cache server has two auxiliary requests that are SCRs for content A and B, respectively. 
    	\item A primary request might be SCR or MCR. If a primary is MCR, it has at most three candidate cache servers.  
    \end{enumerate}
    We define the primary requests according to the clauses in $\mathcal{F}$. For any clause, there is a primary request for content A if all the variables are negated in the clause. Conversely, there is a primary request for content B if all the variables are non-negated. For any primary request, its candidate cache server(s) will be set to correspond to the variable(s) in the clause. 
    
    
    By the construction, because $\alpha$ and $\beta$ are both positive, and $\alpha>\beta$, with the two auxiliary requests, each cache server must store content A or B, so the update cost is a constant of $M\beta s$. Moreover, exactly half of the auxiliary requests are satisfied by cached contents, and the other half must use the cloud, and, therefore, the sum of AoI and download costs with respect to auxiliary requests equals $2Mf + M\alpha s$. For the primary requests, the AoI cost is $Nf$, and zero download cost can be achieved if and only if all the primary requests are satisfied by the cache servers. Now, the question is then if the lower bound of $M\beta s + 2Mf + M\alpha s + Nf$ on the overall cost can be achieved. This question can be answered by solving the defined MCSP instance, and the solution gives the correct answer of M3SAT. Therefore, MCSP is strongly NP-hard.
\end{proof}

\subsection{Problem Formulation}\label{Sec:SSPF:Subsec:PF}
MCSP can be formulated by integer linear programming (ILP) as follows.
\begin{subequations}\label{formulation}
	\begin{align}
		\underset{\boldsymbol{x},\boldsymbol{y},\boldsymbol{z} \in \{0,1\}}\min 	& \ \ A(\boldsymbol{y}) + 	D(\boldsymbol{y}) + 	U(\boldsymbol{x})\label{ILP_obj}\\
		\textup{ s.t.}\ \ \ & \text{\eqref{ILP_c1}-\eqref{ILP_c5}}, \notag\\
		& \sum_{i\in \mathcal{I}}s_i z_{hit} \leq C_h, \forall t\in \mathcal{T},h\in \mathcal{H}\label{ILP_SC}\\
		& \sum_{i \in \mathcal{I}}s_i x_{hit0} \leq G_h, \forall t\in \mathcal{T},h\in \mathcal{H}\label{ILP_BC}
	\end{align}
\end{subequations}

The objective function \eqref{ILP_obj} is the overall cost. Constraints \eqref{ILP_c1}-\eqref{ILP_c5} state the relationship among $\boldsymbol{x}$, $\boldsymbol{y}$, and $\boldsymbol{z}$. Constraints \eqref{ILP_SC} and \eqref{ILP_BC} state the cache and backhaul capacity, respectively. The ILP model can be solved by solvers e.g., Gurobi \cite{GUROBI}, though they need significant computational time. For efficient solution of MCSP, we next provide an algorithm based on repeated column generation. 

\section{Algorithm Design} \label{Sec:AD}

\subsection{Reformulation of MCSP} \label{Sec:AD:Subsec:Rfm}
We first give a reformulation of MCSP. Let us consider any of the cache servers, and the caching and updating solution for a particular content on the server. This can be viewed along the time slots, namely that for a time slot, the content is either cached or not, represented by one or zero, respectively. In the former case, it is also significant to know if the content gets updated. Subsequently, we will use the term update to refer to downloading a content that is either added to the cache or refreshed with respect to AoI. A sequence that represents the solutions of one content, mathematically, is called a column. 

In other words, a column for any content in a cache server is a vector representing the caching and updating decisions of the content over all time slots. We denote by $\mathcal{L}$ the index set of all possible such vectors. Each column $\ell \in \mathcal{L}$ consists of $T$ tuples $(q_{t\ell}, p_{t\ell})$ over all the time slots, i.e., $\ell = \left[(q_{1\ell}, p_{1\ell}), (q_{2\ell}, p_{2\ell}), \dots, (q_{T\ell}, p_{T\ell})\right]^ \mathrm{T} $, where $q_{t\ell}$ and $p_{t\ell}$ are one if and only if the status is cached and updated in time slot $t$, respectively. An example of the columns is given in Fig. \ref{fig:column_example}. Because updating a content implies it is cached, a tuple $(q_{t\ell}, p_{t\ell})$ has three possible patterns: $(1, 1)$, $(1, 0)$, and $(0, 0)$. Therefore, the cardinality of set $\mathcal{L}$ is bounded by $3^T$.

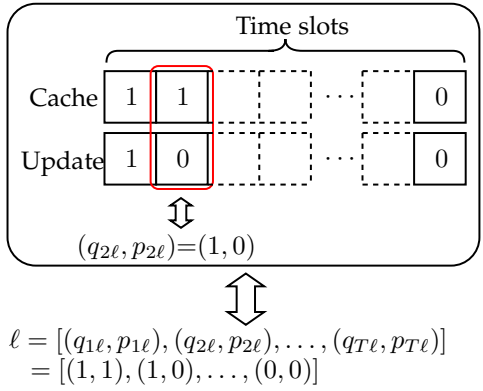
\begin{figure}[t]
    \begin{center}
        \tikzset{every picture/.style={line width=0.75pt}} 
        \begin{tikzpicture}[x=0.75pt,y=0.75pt,yscale=-1,xscale=1, scale = 0.85]
            
            \draw   (126.13,94.77) -- (156.75,94.77) -- (156.75,125.39) -- (126.13,125.39) -- cycle ;
            \draw   (156.75,94.77) -- (187.37,94.77) -- (187.37,125.39) -- (156.75,125.39) -- cycle ;
            \draw   (309.84,94.77) -- (340.45,94.77) -- (340.45,125.39) -- (309.84,125.39) -- cycle ;
            \draw   (126.13,131.51) -- (156.75,131.51) -- (156.75,162.13) -- (126.13,162.13) -- cycle ;
            \draw   (156.75,131.51) -- (187.37,131.51) -- (187.37,162.13) -- (156.75,162.13) -- cycle ;
            \draw   (309.84,131.51) -- (340.45,131.51) -- (340.45,162.13) -- (309.84,162.13) -- cycle ;
            \draw  [dash pattern={on 2pt off 2pt}]  (248.6,94.77) -- (248.6,125.39) ;
            \draw  [dash pattern={on 2pt off 2pt}]  (217.98,94.77) -- (248.6,94.77) ;
            \draw  [dash pattern={on 2pt off 2pt}]  (217.98,125.39) -- (248.6,125.39) ;
            \draw  [dash pattern={on 2pt off 2pt}]  (279.22,94.77) -- (279.22,125.39) ;
            \draw  [dash pattern={on 2pt off 2pt}]  (279.22,94.77) -- (309.84,94.77) ;
            \draw  [dash pattern={on 2pt off 2pt}]  (279.22,125.39) -- (309.84,125.39) ;
            \draw  [dash pattern={on 2pt off 2pt}]  (248.6,131.51) -- (248.6,162.13) ;
            \draw  [dash pattern={on 2pt off 2pt}]  (217.98,131.51) -- (248.6,131.51) ;
            \draw  [dash pattern={on 2pt off 2pt}]  (217.98,162.13) -- (248.6,162.13) ;
            \draw  [dash pattern={on 2pt off 2pt}]  (279.22,131.51) -- (279.22,162.13) ;
            \draw  [dash pattern={on 2pt off 2pt}]  (279.22,131.51) -- (309.84,131.51) ;
            \draw  [dash pattern={on 2pt off 2pt}]  (279.22,162.13) -- (309.84,162.13) ;
            \draw  [color={rgb, 255:red, 255; green, 0; blue, 0 }  ,draw opacity=1 ] (153.5,96.46) .. controls (153.5,93.86) and (155.61,91.75) .. (158.21,91.75) -- (185.79,91.75) .. controls (188.39,91.75) and (190.5,93.86) .. (190.5,96.46) -- (190.5,160.04) .. controls (190.5,162.64) and (188.39,164.75) .. (185.79,164.75) -- (158.21,164.75) .. controls (155.61,164.75) and (153.5,162.64) .. (153.5,160.04) -- cycle ;
            \draw   (339.7,89.7) .. controls (339.7,85.03) and (337.37,82.7) .. (332.7,82.7) -- (243.5,82.7) .. controls (236.83,82.7) and (233.5,80.37) .. (233.5,75.7) .. controls (233.5,80.37) and (230.17,82.7) .. (223.5,82.7)(226.5,82.7) -- (134.3,82.7) .. controls (129.63,82.7) and (127.3,85.03) .. (127.3,89.7) ;
            \draw   (210,214.95) -- (220.9,222.56) -- (215.45,222.56) -- (215.45,237.79) -- (220.9,237.79) -- (210,245.4) -- (199.1,237.79) -- (204.55,237.79) -- (204.55,222.56) -- (199.1,222.56) -- cycle ;
            \draw   (68.6,71.51) .. controls (68.6,62.26) and (76.11,54.75) .. (85.36,54.75) -- (331.84,54.75) .. controls (341.09,54.75) and (348.6,62.26) .. (348.6,71.51) -- (348.6,193.49) .. controls (348.6,202.74) and (341.09,210.25) .. (331.84,210.25) -- (85.36,210.25) .. controls (76.11,210.25) and (68.6,202.74) .. (68.6,193.49) -- cycle ;
            \draw   (171.52,170.15) -- (177.84,174.56) -- (174.68,174.56) -- (174.68,183.39) -- (177.84,183.39) -- (171.52,187.8) -- (165.2,183.39) -- (168.36,183.39) -- (168.36,174.56) -- (165.2,174.56) -- cycle ;
            \draw  [dash pattern={on 2pt off 2pt}]  (217.98,94.77) -- (217.98,125.39) ;
            \draw  [dash pattern={on 2pt off 2pt}]  (187.37,94.77) -- (217.98,94.77) ;
            \draw  [dash pattern={on 2pt off 2pt}]  (187.37,125.39) -- (217.98,125.39) ;
            \draw  [dash pattern={on 2pt off 2pt}]  (187.37,131.51) -- (217.98,131.51) ;
            \draw  [dash pattern={on 2pt off 2pt}]  (187.37,162.13) -- (217.98,162.13) ;
            \draw  [dash pattern={on 2pt off 2pt}]  (217.98,131.51) -- (217.98,162.13) ;
            
            \draw (60, 244.5) node [anchor=north west][inner sep=0.75pt]    {$ \begin{array}{l}
            \ell =[( q_{1\ell } ,p_{1\ell }) ,\textcolor[rgb]{0,0,0}{(}\textcolor[rgb]{0,0,0}{q}\textcolor[rgb]{0,0,0}{_{2\ell }}\textcolor[rgb]{0,0,0}{,p}\textcolor[rgb]{0,0,0}{_{2\ell }}\textcolor[rgb]{0,0,0}{)} , \dots,( q_{T\ell } ,p_{T\ell })]\\
            \ \ \ =[( 1,1) ,\textcolor[rgb]{0,0,0}{( 1,0}\textcolor[rgb]{0,0,0}{)}\textcolor[rgb]{0,0,0}{,} \dots,( 0,0)]
            \end{array}$};
            \draw (202.22, 60) node [anchor=north west][inner sep=0.75pt]   [align=left] {Time slots};
            \draw (136.64, 103) node [anchor=north west][inner sep=0.75pt]    {$1$};
            \draw (167.08, 103) node [anchor=north west][inner sep=0.75pt]    {$1$};
            \draw (320.51, 103) node [anchor=north west][inner sep=0.75pt]    {$0$};
            \draw (136.64, 140) node [anchor=north west][inner sep=0.75pt]    {$1$};
            \draw (167.08, 140) node [anchor=north west][inner sep=0.75pt]    {$0$};
            \draw (320.51, 140) node [anchor=north west][inner sep=0.75pt]    {$0$};
            \draw (80, 103) node [anchor=north west][inner sep=0.75pt]   [align=left] {Cache};
            \draw (75, 140) node [anchor=north west][inner sep=0.75pt]   [align=left] {Update};
            \draw (255, 105) node [anchor=north west][inner sep=0.75pt]    {$\cdots $};
            \draw (255, 142) node [anchor=north west][inner sep=0.75pt]    {$\cdots $};
            \draw (108, 190) node [anchor=north west][inner sep=0.75pt]    {$\textcolor[rgb]{0,0,0}{(}\textcolor[rgb]{0,0,0}{q}\textcolor[rgb]{0,0,0}{_{2\ell }}\textcolor[rgb]{0,0,0}{,p}\textcolor[rgb]{0,0,0}{_{2\ell }}\textcolor[rgb]{0,0,0}{)}\textcolor[rgb]{0,0,0}{=}\textcolor[rgb]{0,0,0}{(}\textcolor[rgb]{0,0,0}{1,0}\textcolor[rgb]{0,0,0}{)}$};
        \end{tikzpicture}
    \end{center}
    \caption{An example of columns and the corresponding caching and updating decisions. The associated AoI values are directly induced from the caching and updating decisions.}\label{fig:column_example}
\end{figure}

We use a binary variable $\chi_{hi\ell}$ for column $\ell$, content $i$, and cache server $h$. This variable is one if and only if column $\ell$ is selected for content $i$ and cache server $h$. Because a column $\ell$ fully contains the caching and updating decisions, the associated AoI values are known for any given column. 

Hence, for any given column $\ell$ for content $i$ and cache server $h$, part of the overall cost of MCSP is determined. This includes
\begin{enumerate}
	\item The update cost of cache server $h$ for content $i$.
	\item The AoI and download costs of the server's SCRs. This set of the SCRs can be expressed by $\mathcal{R}^{\text{s}}_{hi} = \{r \in \mathcal{R}^{\text{s}}| (\mathcal{H}_r = \{h\}) \wedge (i_r = i)\}$.
\end{enumerate}
We use $S_{hi\ell}$ to denote the sum of the above costs.

The unknown cost component is related to the $y$-variables, that is, the AoI and download costs of the MCRs. We use $A^{\text{m}}(\boldsymbol{y})$ and $D^{\text{m}}(\boldsymbol{y})$ to represent them, and they can be expressed by replacing $\mathcal{R}$ with $\mathcal{R}^{\text{m}}$ in \eqref{AAA} and \eqref{DDD}, respectively.

Accordingly, MCSP can be reformulated as follows.
\begin{subequations}\label{MP}
	\begin{align}
		\underset{\boldsymbol{\chi},\boldsymbol{y}\in \{0,1\}}\min & \ \sum_{h\in \mathcal{H}} \sum_{i \in \mathcal{I}} \sum_{\ell \in \mathcal{L}} S_{hi\ell} \chi_{hi\ell} + A^{\text{m}}(\boldsymbol{y}) + D^{\text{m}}(\boldsymbol{y}) \\
		\text{s.t.} \ \ \ & \text{\eqref{ILP_c4}}, \notag\\
		& y_{rha} \leq \sum_{h \in \mathcal{H}_r}  \sum_{i \in \mathcal{I}} \sum_{\ell \in \mathcal{L}} B^{\ell}_{rha}\chi_{hi\ell}, \notag\\
		& \qquad \forall r \in \mathcal{R}^{\text{m}}, h \in \mathcal{H}_r, a \in \{0,1,\dots,t_r-1\} \label{MP_y}\\
		& \sum_{i \in \mathcal{I}}\sum_{\ell \in \mathcal{L}}  s_i q_{t\ell} \chi_{hi\ell} \leq C_h, \forall h\in \mathcal{H}, t\in \mathcal{T}\label{MP_SC}\\
		& \sum_{i \in \mathcal{I}}\sum_{\ell \in \mathcal{L}}  s_i p_{t\ell} \chi_{hi\ell} \! \leq \! G_h, \forall h\in \mathcal{H}, t\in \mathcal{T}\label{MP_BC}\\
		& \sum_{\ell \in \mathcal{L}} \chi_{hi\ell} = 1, \forall h \in \mathcal{H}, i \in \mathcal{I}\label{MP_one}
	\end{align}
\end{subequations}
Using the notion of column, we do not need the $x$-variables and the $z$-variables any more, thus constraints \eqref{ILP_c1} and \eqref{ILP_c3} are dropped. Constraint \eqref{MP_y} is derived from constraint \eqref{ILP_c5}, where $B^{\ell}_{rha}$ is one if and only if column $\ell$ for content $i$ and cache server $h$ can satisfy request $r$ with AoI $a$. As the columns include all the scheduling information, the values of $B^{\ell}_{rha}$ are known. Cache and backhaul capacity constraints are imposed by \eqref{MP_SC} and \eqref{MP_BC}. Constraint \eqref{MP_one} indicate that exactly one of the columns with respect to content $i$ and cache server $h$ is selected.

\subsection{Column Generation} \label{Sec:AD:Subsec:CG}

There are exponentially many candidate columns. However, the structure of \eqref{MP} can be exploited by using column generation. In column generation, the problem is decomposed into a restricted master problem (RMP) and a subproblem (SP), and RMP and SP are solved alternately. A new column that has the potential of improving the objective function is generated in each iteration by solving the SP. Note that most of the variables $\chi_{hi\ell}$ are zeros at optimum due to \eqref{MP_one}, and consequently only a small number of columns will be generated.

\subsubsection{Restricted Master Problem} \label{Sec:AD:Subsec:CG:Subsubsec:RMP}

RMP is a linear programming (LP) problem that differs from \eqref{MP} in the following: 
\begin{enumerate}
	\item The binary variables $\boldsymbol{y}$ and $\boldsymbol{\chi}$ are relaxed using the continuous domain $[0,1]$.
	\item For each content $i$ and server $h$, only a subset of the $\chi$-variables is present. This is represented by defining $\mathcal{L}^\prime_{hi}$, containing the columns generated so far for $i$ and $h$. Initially, $\mathcal{L}^\prime_{hi}$ only contains the columns representing that content $i$ is not cached by cache server $h$.
	\item As the subset of columns in RMP are differed by $i$ and $h$, we use $B^{i\ell}_{rha}$, $q_{hit\ell}$, and $p_{hit\ell}$ to replace $B^{\ell}_{rha}$, $q_{t\ell}$, and $p_{t\ell}$ for any column $\ell$ in individual subtset $\mathcal{L}^\prime_{hi}$.
\end{enumerate}

\subsubsection{Subproblem} \label{Sec:AD:Subsec:CG:Subsubsec:SP}

At the optimum of RMP, denote by $\pi^*_{rha}$, $\mu^*_{ht}$, $\varphi^*_{ht}$, and $\lambda^*_{hi}$ the corresponding optimal dual variables of the counterparts of \eqref{MP_y}, \eqref{MP_SC}, \eqref{MP_BC}, and \eqref{MP_one}, respectively. For any content $i$ and cache server $h$, the LP reduced cost of column $\ell$ is 
\begin{align}
	S_{hi\ell} + \sum_{r \in \mathcal{R}} \sum_{a = 0}^{t_r-1} B^{i\ell}_{rha} \pi^*_{rha} - s_i \sum_{t \in \mathcal{T}} \left( q_{hit\ell}\mu^*_{ht} + p_{hit\ell}\varphi^*_{ht} \right) - \lambda^*_{hi}.
\end{align}

By LP theory, only a column with negative reduced cost may improve the objective function value. To this end, we would like to find the column outside $\mathcal{L}^\prime_{hi}$ of minimum reduced cost. Obviously, this SP is decomposed by cache server as well as content. Finding the minimum reduced cost can be equivalently expressed using variables $\boldsymbol{x}$ and $\boldsymbol{z}$ in the SP, formulated below for content $i$ and server $h$.
\begin{subequations}\label{SP}
	\begin{align}
		\text{[SP$_{hi}$]\ } &\underset{\boldsymbol{x},\boldsymbol{z} \in \{0,1\}}\min \ \ S_{hi}(\boldsymbol{x}, \boldsymbol{z})+ \sum_{r\in R_{hi}} \sum_{a=0}^{t_r - 1} \pi^*_{rha} x_{hit_ra}\notag\\ &\qquad\qquad - s_i \sum_{t\in \mathcal{T}}(\mu^*_{ht} z_{hit} + \varphi^*_{ht} x_{hit0})- \lambda^*_{hi}\label{SP_obj}\\
		\text{s.t.} \ \ \ & 	\text{\eqref{ILP_c1}-\eqref{ILP_c3}},\notag
	\end{align}
\end{subequations}
where we use $S_{hi}(\boldsymbol{x}, \boldsymbol{z})$ to represent that the $S$-value of content $i$ for cache server $h$ is a function of $\boldsymbol{x}$ and $\boldsymbol{z}$. 

\subsection{SP as a Shortest Path Problem} \label{Sec:AD:Subsec:Fig}

\input{fig-shortest_path_graph}

We show that SP$_{hi}$ can be solved as a shortest path problem in polynomial time \cite{10.5555/137406}, even though it is stated as an ILP. Consider content $i$ and cache server $h$. We construct an acyclic directed graph in Fig. \ref{fig:shortest_path_graph} in which finding the shortest path from node $\Lambda$ to node $\Pi$ is equivalent to solving SP$_{hi}$. For brevity, we define an auxiliary function 
\begin{equation}\label{wight1}
	g_{hi}(o, d, a) = \sum_{ r \in \mathcal{R}_{\varsigma}}\pi^*_{rha},
\end{equation}
where $\mathcal{R}_{\varsigma} = \{r \in \mathcal{R}^{\text{m}}| (h \in \mathcal{H}_r) \wedge (i_r = i) \wedge (o_r = o) \wedge (d_r \geq d) \}$. A request in set $\mathcal{R}_{\varsigma}$ has the following properties:
\begin{enumerate}
	\item The request is made for content $i$ in time slot $o$.
	\item The request is MCR, and it can be satisfied by cache server $h$.
	\item The deadline of the request is greater than or equal to $d$.
\end{enumerate}

In the figure, the nodes can be divided into upper and lower parts except for $\Lambda$ and $\Pi$:
\begin{itemize}
	\item The upper nodes ($\theta$-nodes) represent that the content is not cached. One can see that we construct multiple nodes instead of only one node to represent the uncached status in a time slot. The subscripts are used to distinguish between the nodes in the same time slot. There are $\frac{t^2 - t + 2}{2}$ $\theta$-nodes in time slot $t$, and for these $\theta$-nodes, the predecessors of the last $(t-1)$ $\theta$-nodes are $\kappa$-nodes, for example, in time slot $3$, $\kappa^{2}_{0}\rightarrow \theta^{3}_{3}$ and $\kappa^{2}_{1}\rightarrow \theta^{3}_{4}$. The predecessors of other $\theta$-nodes are $\theta$-nodes in the previous time slot, for example, $\theta^{2}_{1}\rightarrow \theta^{3}_{1}$ and $\theta^{2}_{2}\rightarrow \theta^{3}_{2}$.
	\item The lower nodes ($\kappa$-nodes) represent that the content is cached. Similarly, the superscripts of $\kappa$-nodes indicate their associated time slots. The subscripts are the AoI values of the content (therefore, the subscripts start with zero). 
\end{itemize}
We have the following types of arcs:
\begin{enumerate}
	\item The gray dashed arcs all terminate at $\theta$-nodes, and they represent that the content is not cached in the next time slot. For example, arcs $(\theta_{2}^{2}, \theta^{3}_{2})$ and $(\kappa_{0}^{2}, \theta^{3}_{3})$ both mean that the content is not cached in time slot $3$. The weights of these arcs are zero as the corresponding the $x$-variables and the $z$-variables are both zero in \eqref{SP_obj}.
	\item The orange arcs all terminate at node $\Pi$ and emanate from the nodes that are in the last time slot. These arcs are relevant to calculating the corresponding $S_{hi}(\boldsymbol{x}, \boldsymbol{z})$ and constant term $\lambda_{hi}^*$ in \eqref{SP_obj}. The weight of any of these arcs is 
	\begin{equation}\label{orangearc}
		\left|  \mathcal{R}^{\text{s}}_{hi} \right|\left[ f_{i}\left(0\right) + \alpha s_i \right]-\lambda_{hi}.
	\end{equation} 
	The first term of \eqref{orangearc} is for all the SCRs in $\mathcal{R}^{\text{s}}_{hi} $. If some of these requests are met by the cache, the costs in \eqref{orangearc} will be subtracted, see further details below.
	\item A blue arc connects two adjacent $\kappa$-nodes, both with AoI zero. Such an arc corresponds to that the content is updated in the two adjacent time slots. In this case, the corresponding $x$-variables and $z$-variables are both one in \eqref{SP_obj}. For arc $(\kappa_{0}^{t-1}, \kappa_{0}^{t})$, the weight is
	\begin{equation}\label{bluearc}
		-\left|\mathcal{R}_{\Xi}\right| \alpha s_i + g_{hi}(t, t, 0) - s_i (\mu^*_{ht} + \varphi^*_{ht}),
	\end{equation}
	where $\mathcal{R}_{\Xi} = \{r \in \mathcal{R}^{\text{s}}_{hi}| d_r = t\}$. A request in $\mathcal{R}_{\Xi}$ is made for content $i$, and it can be satisfied by cache server $h$ before time slot $t$. Clearly, if we select blue arc $(\kappa_{0}^{t-1}, \kappa_{0}^{t})$, the SCRs in $\mathcal{R}_{\Xi}$ can be satisfied in time slot $t$ by cache server instead of the cloud server with AoI zero. Therefore, the first term of \eqref{bluearc} has the effect of subtracting the download costs present in \eqref{orangearc}. In addition, term $g_{hi}(t,t,0)$ is because the MCRs in $\{r \in \mathcal{R}^{\text{m}} | (h \in \mathcal{H}_r) \wedge (i_r = i) \wedge (o_r = t) \wedge (d_r \geq t)  \}$ can be satisfied with AoI zero. 
	\item A black arc emanates from a $\kappa$-node with nonzero AoI value, and terminates at a $\kappa$-node with zero AoI value. If such a black arc $(\kappa_{a}^{t-1}, \kappa_{0}^{t})$ is selected, it means that the content has not been updated for $a$ time slots, and it is updated in time slot $t$. In this case, the MCRs in $\{r \in \mathcal{R}^{\text{m}}| (h \in \mathcal{H}_r) \wedge (i_r = i) \wedge (t-a \leq o_r \leq t) \wedge (d_r \geq t) \}$ can be satisfied in time slot $t$ with AoI zero, so the weight of arc $(\kappa_{a}^{t-1}, \kappa_{0}^{t})$ is given by
	\begin{equation}\label{blackarc}
		-\left|\mathcal{R}_{\Xi}\right| \alpha s_i + \sum_{\tau = t - a}^{t} g_{hi}(\tau, t, 0) - s_i (\mu^*_{ht} + \varphi^*_{ht}).
	\end{equation}
	\item A red arc emanates from a $\theta$-node, and terminates at a $\kappa$-node with zero AoI. If such a red arc $(\theta_{n}^{t-1}, \kappa_{0}^{t})$ is selected, it means that the content is updated in time slot $t$, and also it has not been updated for some consecutive time slots. We give the following recursive function denoted by $\xi(\theta_{n}^{t-1})$ that calculates how many consecutive time slots the content has not been updated at node $\theta_{n}^{t-1}$:
	\begin{equation}
		\xi(\theta^{t-1}_{n})=\left\{
		\begin{array}{lcl}
		1   &    &  { pre(\theta^{t-1}_{n}) = \Lambda}\\
		\xi(\theta^{t-2}_{n}) + 1   &    &  { pre(\theta^{t-1}_{n}) = \theta^{t-2}_{n}}\\
		a+1  &   &  {pre(\theta^{t-1}_{n}) = \kappa_a^{t-2}}\\
		\end{array} \right.
	\end{equation}
	where $pre(\cdot)$ is the predecessor of some node. For example, $\xi(\theta^{4}_{4}) = \xi(\theta^{3}_{4}) + 1 = (1+1)+1 = 3$ represents the content has not been updated for three time slots (i.e., time slots $2$, $3$, and $4$).
	With the auxiliary function, the weight of arc $(\theta_{n}^{t-1}, \kappa_{0}^{t})$, similar to \eqref{blackarc}, is given by 
	\begin{equation}
		-\left|\mathcal{R}_{\Xi}\right| \alpha s_i +  \sum_{\tau = t - \xi(\theta^{t-1}_{n})}^{t} g_{hi}(\tau, t, 0) - s_i (\mu^*_{ht} + \varphi^*_{ht}).
	\end{equation}
	\item The purple arcs represent AoI evolution over time slots when the content is cached but not updated. For these arcs, the corresponding $x$-variables are zero, but the corresponding $z$-variables are one. The weight of arc $(\kappa_{a-1}^{t-1}, \kappa_{a}^{t})$ is given by
	\begin{align}\label{purplearcs}
		 &\sum_{\tau = 0}^{t-1} \sum_{r \in \mathcal{R}_{\Psi}} \left[ -f_i\left(0\right)  
		 - \alpha s_i + f_i\left(\max\left\{0, a - \tau\right\} \right) \right] \notag\\
		 &\qquad\qquad\qquad\qquad\qquad\quad + g_{hi}(t, t, a) - s_i \mu^*_{ht},
	\end{align}
	where $\mathcal{R}_\Psi  = \{r \in \mathcal{R}^{\text{s}}_{hi}| (o_r = t - \tau) \wedge (d_r = t)\}$, i.e., the requests in $\mathcal{R}_\Psi $ are SCRs made for content $i$ in time slot $(t-\tau)$, and they can be satisfied by cache server $h$ before time slot $t$. If arc $(\kappa_{a-1}^{t-1}, \kappa_{a}^{t}  )$ is selected, the AoI values in the previous $a$ time slots must be $0, 1,\dots, a-1$. Thus we know that the lowest possible AoI value for the SCRs in $\{r \in \mathcal{R}^{\text{s}}_{hi}| (o_r \leq t- a) \wedge (d_r = t)\}$ must be zero, and for the SCRs in $\{r \in \mathcal{R}^{\text{s}}_{hi}| (o_r > t- a) \wedge (d_r = t)  \}$,  the lowest possible AoI value must be $a - (d_r-o_r)$, therefore the first term of \eqref{purplearcs}. In addition, the MCRs in $\{ r \in \mathcal{R}^{\text{m}}| (h \in \mathcal{H}_r) \wedge (i_r = i) \wedge (o_r = t) \wedge (d_r \geq t) \}$ can be satisfied with AoI $a$, hence $g_{hi}(t, t, a)$.
\end{enumerate}

\begin{theorem}
	For content $i$ and cache server $h$, SP{$_{hi}$} can be solved in polynomial time as a shortest path problem.
\end{theorem}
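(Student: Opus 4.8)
The plan is to exhibit an explicit, length-preserving correspondence between the feasible solutions of SP$_{hi}$ and the directed $\Lambda$--$\Pi$ paths in the acyclic graph of Fig.~\ref{fig:shortest_path_graph}, and then to invoke the standard fact that a shortest path in a directed acyclic graph with polynomially many nodes and arcs is computable in polynomial time \cite{10.5555/137406}.

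First I would formalize the correspondence. By constraints \eqref{ILP_c1}--\eqref{ILP_c3}, a feasible $(\boldsymbol{x},\boldsymbol{z})$ for SP$_{hi}$ is nothing but a per-slot status of content $i$ on server $h$: in slot $t$ the content is either uncached, or cached with a well-defined AoI $a\in\{0,\dots,t-1\}$, and \eqref{ILP_c1} forces a cached status with AoI $a\ge 1$ in slot $t$ to be preceded by a cached status with AoI $a-1$ in slot $t-1$, while AoI $0$ in slot $t$ (an update) may follow any status. Mapping ``uncached in slot $t$'' to a $\theta$-node of slot $t$ and ``cached with AoI $a$'' to $\kappa^t_a$, these transition rules are precisely the arc set of the graph: the purple arcs realize ageing $\kappa^{t-1}_{a-1}\to\kappa^t_a$; the blue, black and red arcs realize an update in slot $t$ from a cached-with-AoI-$0$, cached-with-AoI-$a>0$, or uncached predecessor, respectively; and the gray dashed arcs realize remaining uncached. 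The recursion $\xi(\theta^{t-1}_n)$ is exactly the device that lets the $\theta$-layer remember, along a path, how many consecutive slots the content has been uncached --- the one extra piece of state needed to price an update arc emanating from the $\theta$-layer --- which is why several $\theta$-nodes per slot are introduced. I would check that each feasible solution yields a unique such $\Lambda$--$\Pi$ path, and each path a unique feasible solution, so that the two feasible sets are in one-to-one correspondence.

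Second I would verify that the length of a path equals the SP$_{hi}$ objective \eqref{SP_obj} of the corresponding solution, by a term-by-term accounting. The constant $-\lambda^*_{hi}$ and the baseline ``all SCRs served from the cloud with AoI $0$'' cost $|\mathcal{R}^{\text{s}}_{hi}|[f_i(0)+\alpha s_i]$ are charged once, on the unique orange arc any path must use. Each time the content is cached in slot $t$ (purple/blue/black/red arcs), the term $-s_i\mu^*_{ht}$ appears, and when it is additionally updated (blue/black/red arcs) the extra $-s_i\varphi^*_{ht}$ appears; summing over the path reproduces $-s_i\sum_t(\mu^*_{ht}z_{hit}+\varphi^*_{ht}x_{hit0})$. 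The SCR corrections to the baseline are the $-|\mathcal{R}_\Xi|\alpha s_i$ terms on update arcs (an SCR with deadline $t$ is served by the cache once the content is (re)cached by slot $t$, refunding its $\alpha s_i$ with AoI $0$, hence no $f$-correction) and the bracketed sum in \eqref{purplearcs} on ageing arcs (an SCR with deadline $t$ served while the content is cached but stale has its $\alpha s_i$ refunded and its AoI cost changed from $f_i(0)$ to $f_i(\max\{0,a-\tau\})$); one checks that each SCR $r$ with $i_r=i$, $\mathcal{H}_r=\{h\}$ is corrected on exactly the arc of slot $d_r$, so that ``baseline minus corrections'' equals $S_{hi}(\boldsymbol{x},\boldsymbol{z})$. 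Finally, the MCR dual contributions $\sum_{r}\sum_a B^{i\ell}_{rha}\pi^*_{rha}$ are reproduced by the $g_{hi}$ terms: on an update arc of slot $t$ reaching AoI $0$, the sum $\sum_{\tau}g_{hi}(\tau,t,0)$ over the relevant origin slots picks up every MCR this path can serve with AoI $0$ and deadline $\ge t$, while on an ageing arc $g_{hi}(t,t,a)$ picks up those it can serve with AoI $a$; since $g_{hi}(o,d,a)$ restricts to $o_r=o$ and $d_r\ge d$, no MCR is counted twice for the same AoI value, whereas a request that can legitimately be served at several AoI values (the content sitting in $r$'s window across an update and subsequent ageing) is counted once per value, exactly matching $B^{i\ell}_{rha}$.

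With the correspondence and the equality of lengths established, $\min$ over $\Lambda$--$\Pi$ paths equals $\min$ of \eqref{SP_obj} over feasible $(\boldsymbol{x},\boldsymbol{z})$. The graph is acyclic with $O(t^2)$ nodes in slot $t$, hence $O(T^3)$ nodes and arcs overall, and every arc weight is computable in time polynomial in $T$ and $|\mathcal{R}|$; a shortest path in a DAG (the weights may be negative, but there are no cycles) is obtained by relaxing arcs in topological order in $O(|V|+|E|)$ time, so SP$_{hi}$ is solved in polynomial time. The main obstacle is the second step --- the exact cost bookkeeping: ensuring each SCR correction fires on exactly one arc, at its deadline slot $d_r$, and with the correct sign, and ensuring the MCR dual terms are neither double-counted nor omitted across update and ageing arcs, which is precisely what the index sets in the definitions of $g_{hi}$, $\mathcal{R}_\Xi$, $\mathcal{R}_\Psi$ and the recursion $\xi$ are engineered to guarantee; once this is confirmed, the polynomial-time conclusion is immediate.
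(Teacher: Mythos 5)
Your proposal is correct and follows essentially the same route as the paper's own proof: a one-to-one, length-preserving correspondence between feasible $(\boldsymbol{x},\boldsymbol{z})$ for SP$_{hi}$ and $\Lambda$--$\Pi$ paths in the acyclic graph, followed by the standard polynomial-time shortest-path argument on a DAG with $\mathcal{O}(T^3)$ nodes and arcs. The only difference is one of emphasis: the paper asserts the cost equality by pointing to \eqref{orangearc}--\eqref{purplearcs} without spelling out the bookkeeping, whereas you make the term-by-term accounting (baseline on the orange arc, per-arc refunds for SCRs at their deadline slots, and the $g_{hi}$ terms matching the MCR dual contributions) explicit, which is a reasonable and slightly more complete rendering of the same argument.
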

\begin{proof}
	We show that the optimal solution of SP{$_{hi}$} can be obtained as the shortest path in the graph constructed. Suppose the optimal solution of SP$_{hi}$, i.e., $\boldsymbol{x}^*$ and $\boldsymbol{z}^*$, is given. A corresponding path is constructed as follows. 
	
	\begin{enumerate}
		\item Starting with time slot one:\
		\begin{itemize}
			\item If $z_{hi1}^* = 0$ and $x_{hi10}^* = 0$, arc $(\Lambda, \theta^{1}_{1})$ is selected.
			\item If $z_{hi1}^* = 1$ and $x_{hi10}^* = 1$, arc $(\Lambda, \kappa_{0}^{1})$ is selected.
		\end{itemize}
		\item For time slot $t$ $(t>1)$:\
		\begin{itemize}
			\item If $z_{hit}^* = 0$ and $x_{hita}^* = 0$ $ \forall a \in \left\{0, 1, \dots ,t-1\right\}$, arc $(\theta^{t-1}_{n}, \theta^{t}_{n})$ or $(\kappa^{t-1}_{a}, \theta^{t}_{n})$ is selected according to which one applies in the previous time slot.
			\item If $z_{hit}^* = 1$ and $x_{hit0}^* = 1$, arc $(\theta^{t-1}_{n}, \kappa_{0}^{t})$ or $(\kappa_{a}^{t-1}, \kappa_{0}^{t})$ is selected according to the previous time slot. 
			\item If $z_{hit}^* = 1$ and $x_{hita}^* = 1$ $(a \neq 0)$, arc $(\kappa^{t-1}_{a-1}, \kappa_{a}^{t})$ is selected. 
		\end{itemize}
		\item Finally, one of the orange arcs $(\theta^{t}_{n}, \Pi)$ and $(\kappa^{t}_{a}, \Pi)$ is selected to reach node $\Pi$.
	\end{enumerate}
	Clearly, a path from node $\Lambda$ to $\Pi$ is constructed in this way. According to \eqref{orangearc}-\eqref{purplearcs}, the cost of the derived path equals the object function value of  SP$_{hi}$ for $\boldsymbol{x}^*$ and $\boldsymbol{z}^*$. 
	
	Conversely, consider the shortest path. For time slot $t$:
	\begin{itemize}
		\item If the path contains any $\theta$-node in time slot $t$, we set $z_{hit} = 0$ and $x_{hita} = 0$ $(\forall a\in \{0,1, \dots, t-1\})$. 
		\item If the path contains node $\kappa^{t}_{\bar{a}}$ in time slot $t$, we set $z_{hit} = 1$, $x_{hit\bar{a}} = 1$, and $x_{hita} = 0$ $(a \neq \bar{a})$. 
	\end{itemize}
	With this variable value setting, the objective function has the same value as the path length (i.e., total arc weights). One can observe that the solution constructed in this way satisfies the constraints of the SP because any path from node $\Lambda$ to $\Pi$ adheres to constraints \eqref{ILP_c1} and \eqref{ILP_c3}. 	
	
	In addition, the shortest path problem in directed acyclic graph can be solved in polynomial time no matter if the arc costs are negative or not \cite{10.5555/137406}. There are $\mathcal{O}(T^3) $ nodes and arcs in the corresponding graph, therefore, the complexity of solving the shortest path problem is $\mathcal{O}(T^3)$. Hence the conclusion.
\end{proof}
The column generation algorithm (CGA) is summarized in Algorithm \ref{al:cg}. The algorithm first initializes $\mathcal{L}'_{hi}$ $\forall h\in \mathcal{H}, i \in \mathcal{I}$ with all-zero column. Then CGA solves RMP and obtains the optimal values of its dual variables. For each content $i$ for cache server $h$, CGA will find the column with minimal reduced cost via solving the corresponding shortest path problem, and add it to $\mathcal{L}'_{hi}$ if the reduced cost is negative. The algorithm will terminate if all reduced costs are non-negative. 
	
\begin{algorithm}[tbp]
    \DontPrintSemicolon
    \caption{Column generation algorithm} \label{al:cg}
    \KwIn{$\mathcal{H}$, $\mathcal{I}$, $\mathcal{T}$, $C_h$, $G_h$, $\forall h \in \mathcal{H}$, $s_i$, $f_i(a)$, $\forall i \in \mathcal{I}$} 
    \KwOut{$\boldsymbol{\chi}^*$, $\boldsymbol{y}^*$, $ \mathcal{L}^\prime$}   
    Initialize $ \mathcal{L}^\prime$\\
    \Repeat
    {
    	\rm The optimum of \eqref{SP_obj} $\geq 0$, $\forall h \in \mathcal{H}, i \in \mathcal{I}$
    }
    {
    	Solve RMP to obtain the optimum $(\boldsymbol{\chi}^*, \boldsymbol{y}^*)$ and dual optimum $(\boldsymbol{\pi}^*, \boldsymbol{\mu}^*, \boldsymbol{\varphi}^*, \boldsymbol{\lambda}^* )$\\
    		\For{$h \in \mathcal{H}$, $i \in \mathcal{I}$ \label{cgalpara1}}
    		{
        		Solve SP$_{hi}$ \eqref{SP} by a shortest path algorithm\\
    	    	\If{\rm the optimum of \eqref{SP_obj} $< 0$} 
        		{
        			Add the corresponding column to $\mathcal{L}^\prime$\\
    	    	}
        	}
    }   
    \Return {$\boldsymbol{\chi}^*$, $\boldsymbol{y}^*$, $ \mathcal{L}^\prime$}\\
\end{algorithm} 

\subsection{Rounding Algorithm} \label{Sec:AD:Subsec:RS}

We remark that Algorithm \ref{al:cg} ends with a solution (i.e., $\boldsymbol{\chi}^*$) that is not necessarily integer. At this stage, the objective function value is a lower bound (LB) of the global optimum. To obtain an integer solution, a naive rounding strategy is to fixe some fractional $\chi$-variable to be one or zero. However, this choice is quite aggressive because the decisions over all time slots for the corresponding content become all fixed when one $\chi$-variable is fixed, and thus we have no chance to alter the solution of this content at all.

We first define two following indicators that can be viewed as the likelihood of caching and updating content $i$ for cache server $h$ in time slot $t$, respectively. 
\begin{align}
	\Gamma_{hit} &= \sum _{\ell\in \mathcal{L}^\prime}q_{hit\ell} \chi_{hi\ell}^*, \label{Gamma} \\
	\Omega_{hit} &= \sum _{\ell\in \mathcal{L}^\prime}p_{hit\ell} \chi_{hi\ell}^*. \label{Omega}
\end{align}
In the light of the use of disjunction in branch-and-bound in solving integer programs \cite{karamanov2011branching}, we consider rounding the $\Gamma$-variables and the $\Omega$-variables instead of the $\chi$-variables. In the following, we prove a relationship among $\boldsymbol{\chi}^*$, $\boldsymbol{\Gamma}$, and $\boldsymbol{\Omega}$, and then propose a rounding algorithm based on this result.

\begin{theorem}\label{theorem 3}
	$\boldsymbol{\chi}^*$ is integer if and only if $\boldsymbol{\Gamma}$ and $\boldsymbol{\Omega}$ are integers.
\end{theorem}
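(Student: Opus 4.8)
The plan is to argue time-slot by time-slot, leaning on two facts. First, the relaxed counterpart of \eqref{MP_one} in RMP is still an equality, so $\sum_{\ell\in\mathcal{L}'_{hi}}\chi^*_{hi\ell}=1$ for every $h\in\mathcal{H}$ and $i\in\mathcal{I}$; consequently $0\leq\Gamma_{hit}\leq 1$ and $0\leq\Omega_{hit}\leq 1$, so ``$\boldsymbol{\Gamma},\boldsymbol{\Omega}$ integer'' means exactly $\Gamma_{hit},\Omega_{hit}\in\{0,1\}$. Second, a column $\ell$ is by definition nothing but its own sequence of tuples $(q_{t\ell},p_{t\ell})_{t\in\mathcal{T}}$, each tuple being one of the three patterns $(1,1)$, $(1,0)$, $(0,0)$; in particular $q_{t\ell}=0$ forces $p_{t\ell}=0$.

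For the forward implication I would simply observe that if $\boldsymbol{\chi}^*$ is integer then, fixing $h$ and $i$, the equality $\sum_{\ell}\chi^*_{hi\ell}=1$ makes exactly one column $\bar\ell$ active, whence $\Gamma_{hit}=q_{hit\bar\ell}\in\{0,1\}$ and $\Omega_{hit}=p_{hit\bar\ell}\in\{0,1\}$ for all $t$.

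The substance is the converse. Assume $\boldsymbol{\Gamma}$ and $\boldsymbol{\Omega}$ are integers, fix $h,i$, and let $\mathcal{S}=\{\ell\in\mathcal{L}'_{hi}:\chi^*_{hi\ell}>0\}$; the goal is to show $|\mathcal{S}|=1$. I would fix a time slot $t$ and use the elementary fact that if $\sum_\ell a_\ell w_\ell=\sum_\ell w_\ell$ with $a_\ell\in\{0,1\}$ and $w_\ell\geq 0$, then $a_\ell=1$ for every $\ell$ with $w_\ell>0$. If $\Gamma_{hit}=0$, then $\sum_\ell q_{hit\ell}\chi^*_{hi\ell}=0$ gives $q_{hit\ell}=0$, hence $p_{hit\ell}=0$, for all $\ell\in\mathcal{S}$, so the common pattern at slot $t$ is $(0,0)$. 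If $\Gamma_{hit}=1$, the quoted fact gives $q_{hit\ell}=1$ for all $\ell\in\mathcal{S}$, so the pattern is $(1,1)$ or $(1,0)$; applying the same fact to $\Omega_{hit}\in\{0,1\}$ then forces $p_{hit\ell}$ to a common value, pinning the pattern to exactly one of $(1,1)$, $(1,0)$. Hence at every $t$ all columns of $\mathcal{S}$ carry an identical tuple, so they are identical as sequences; since the columns in $\mathcal{L}'_{hi}$ are pairwise distinct (no column is inserted twice by Algorithm \ref{al:cg}), $|\mathcal{S}|=1$, and combined with $\sum_\ell\chi^*_{hi\ell}=1$ this yields $\chi^*_{hi\bar\ell}=1$ for the unique $\bar\ell\in\mathcal{S}$ and $0$ otherwise. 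As $h,i$ were arbitrary, $\boldsymbol{\chi}^*$ is integer.

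I expect no deep obstacle here; the only points that need care are (i) recording that the support-averaging identity, together with the equality \eqref{MP_one}, really does collapse the pattern at each slot to a single choice, and (ii) using the three-pattern structure — specifically $q_{t\ell}=0\Rightarrow p_{t\ell}=0$ — so that controlling $\Gamma$ automatically controls $\Omega$ when the slot is uncached. I would also make explicit the (standard) assumption that $\mathcal{L}'_{hi}$ holds no repeated columns, since the argument reduces the support to one column rather than merely to one pattern.
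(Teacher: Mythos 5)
Your proof is correct and rests on the same two pillars as the paper's: the convexity constraint (the RMP counterpart of \eqref{MP_one}) and the pairwise distinctness of the columns in $\mathcal{L}'_{hi}$. The paper argues the sufficiency by contradiction with two fractional columns that must differ in some $q_{hit\ell}$ (remarking that the generalization to more columns and to $\boldsymbol{\Omega}$ is analogous), whereas you run the same idea directly — collapsing the support to a single tuple pattern per slot — which has the minor merit of explicitly covering the case where columns differ only in their $p$-components.
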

\begin{proof}
	The necessity is obvious by the definition of $\boldsymbol{\Gamma}$ and $\boldsymbol{\Omega}$. For sufficiency, assume $\boldsymbol{\Gamma}$ and $\boldsymbol{\Omega}$ are integer. Suppose $\boldsymbol{\chi}^*$ is fractional, i.e., for some $\chi^*_{hi\ell}$, we have $0<\chi^*_{hi\ell} < 1$. By the counterpart of \eqref{MP_one} in RMP, we know that at least one more fractional $\chi$-variable for content $i$ and cache server $h$ must exist. Suppose we have two columns $\ell_1, \ell_2 \in \mathcal{L}'_{hi}$, and their corresponding $\chi_{hi\ell_1}^*$ and $\chi_{hi\ell_2}^*$ are such that 
	\begin{itemize}
		\item $\chi_{hi\ell_1}^*$ and $\chi_{hi\ell_2}^*$ are both fractional;
		\item $\chi_{hi\ell_1}^* + \chi_{hi\ell_2}^* = 1$.
	\end{itemize}
	Then we can infer that 
	\begin{equation}
		\Gamma_{hit} = \sum _{\ell\in \mathcal{L}^\prime}q_{hit\ell} \chi_{hi\ell}^*  = q_{hit\ell_1}\chi_{hi\ell_1}^* +  q_{hit\ell_2}\chi_{hi\ell_2}^*.
	\end{equation}
    Because every column is unique in $\mathcal{L}^\prime_{hi}$, we can always find some time slot $t$ in which $q_{hit\ell_1} = 1$, $q_{hit\ell_2} = 0$ or $q_{hit\ell_1} = 0$, $q_{hit\ell_2} = 1$. In this time slot, $\Gamma_{hit}$ equals $q_{hit\ell_1}\chi_{hi\ell_1}^*$ or $q_{hit\ell_2}\chi_{hi\ell_2}^*$, respectively, and thus $\Gamma_{hit}$ must be fractional, which contradicts the assumption. The argument is easily generalized to more than two columns. Moreover, the same argument applies to $\boldsymbol{\Omega}$. Hence the conclusion.
\end{proof}

By Theorem \ref{theorem 3}, we can perform rounding on $\boldsymbol{\Gamma}$ and $\boldsymbol{\Omega}$ after column generation, instead of $\boldsymbol{\chi}^*$. That is, we fix some $\Gamma$-variables and $\Omega$-variables to be zero or one, then the columns violating these fixed values will be removed from $\mathcal{L}^\prime_{hi}$. This is much less aggressive in rounding than fixing $\chi$-variables because the decision of only one time slot is made, instead of over all time slots. The rounding algorithm is summarized in Algorithm \ref{al:RA}. More specifically, Algorithm \ref{al:RA} has four stages:
\begin{enumerate}
	\item Steps \ref{Freeze1}-\ref{Freeze2}: We first fix the entities that are integer. Namely, if $\Omega_{hit} = 1$, we fix $\Omega_{hit}$ and $\Gamma_{hit} $ to be both one. Similarly, we fix $\Omega_{hit}$ and $\Gamma_{hit} $ to be both zero if $\Gamma_{hit} = 0$. 
	\item Steps \ref{Round1}-\ref{Round2}: For rounding, we first consider fractional $\boldsymbol{\Omega} $ as it impacts both cache and backhaul capacities. We find the $\Omega_{h\hat{i}\hat{t}}$ $(\hat{i}\in\mathcal{I}, \hat{t}\in\mathcal{T})$ that is closest to zero or one (Step \ref{closest}) for each cache server $h$ $(h\in\mathcal{H})$. In the former case, we fix $\Omega_{h\hat{i}\hat{t}}$ to be zero. In the latter case, if the remaining capacities admit content size $s_{\hat{i}}$, we fix both $\Omega_{h\hat{i}\hat{t}}$ and $\Gamma_{h\hat{i}\hat{t}}$ to be one. 
	\item Steps \ref{Round3}-\ref{Round5}: If $\boldsymbol{\Omega} $ is integer, we subsequently consider fractional $\boldsymbol{\Gamma} $, if any. Similarly, we first fix the integer entities (Steps \ref{Freeze3}-\ref{Freeze4}), i.e., fix $\Gamma_{hit}$ to be one if $\Gamma_{hit} = 1$. Then we find the $\Gamma_{h\hat{i}\hat{t}}$ $(\hat{i}\in\mathcal{I}, \hat{t}\in\mathcal{T})$ that is closest to being an integer for each cache server $h$ $(h\in\mathcal{H})$, and perform rounding as above. 
	\item Steps \ref{Remove1}-\ref{Remove2}: Finally, we recalculate the remaining backhaul and cache capacities, and remove all the incompatible columns from $\mathcal{L}^\prime_{hi}$. The incompatible columns include the two following types:
	\begin{itemize}
		\item The columns violating the rounding decisions, e.g., we need to remove all $\ell$ in which $ p_{123\ell} = 0$ if we have fixed $\Omega_{123}$ to be one.
		\item The columns corresponding to updating or caching a content in some time slot but this would exceed the backhaul or cache capacity.
	\end{itemize}
\end{enumerate}

\begin{algorithm}[tbp]
    \caption{Rouding algorithm} \label{al:RA}
    \KwIn{$\boldsymbol{\chi}^*$, $C_h$, $G_h$, $s_i$, $ \mathcal{L}^\prime_{hi}$, $\forall h \in \mathcal{H}, i \in \mathcal{I}$} 
    \KwOut{$ \mathcal{L}^\prime_{hi}$, $\forall h \in \mathcal{H}, i \in \mathcal{I}$}   
    Compute $\boldsymbol{\Gamma}$ and $\boldsymbol{\Omega}$ by \eqref{Gamma} and \eqref{Omega}\label{Freeze1}\\ 
    \For{$h \in \mathcal{H}$, $i \in \mathcal{I}$, $t \in \mathcal{T}$ \label{RApara1}}
    {
    	\If( \tcp*[h]{$\Gamma_{hit}$ must be one}){$\Omega_{hit} = 1$}
    	{
    		Fix both $\Omega_{hit}$ and $\Gamma_{hit}$ to be one\\
    		Remove all the nodes in time slot $t$ except node $\kappa_{0}^{t}$ and their arcs from the graph of SP$_{hi}$\label{RmvNdofGrph1}
    	}
    	\If( \tcp*[h]{$\Omega_{hit}$ must be zero}){$\Gamma_{hit} = 0$}
    	{
    		Fix both $\Omega_{hit}$ and $\Gamma_{hit}$ to be zero\\
    		Remove all the $\kappa$-nodes in time slot $t$ and their arcs from the graph of SP$_{hi}$\label{RmvNdofGrph2}
    	}
    }
    \For{$h \in \mathcal{H}$, $t \in \mathcal{T}$ \label{RApara2}}
    {
    	Compute the remaining backhaul and cache capacities, denoted by $\bar{G}_{ht}$ and $\bar{C}_{ht}$, respectively\label{Freeze2}\\
    }
    
    \For{$h \in \mathcal{H}$\label{Round1}} 
    {
    	\uIf{$\exists \Omega_{hit} \notin \mathbb{Z}$ $(i \in \mathcal{I},t \in \mathcal{T})$ }
    	{
    		$(h, \hat{i}, \hat{t}) \leftarrow \arg\min \{\Delta_{hit}| (\Delta_{hit} \!=\! \min\{\Omega_{hit}, 1 - \Omega_{hit} \}) \wedge (\Omega_{hit} \notin \mathbb{Z} ) \wedge ( i \in \mathcal{I}, t \in \mathcal{T} )  \}$\label{closest}\\
    		\uIf{$(\Omega_{h \hat{i} \hat{t}} < 0.5) \vee (s_{\hat{i}} > \bar{G}_{h\hat{t}})$}
    		{
    			Fix $\Omega_{h \hat{i} \hat{t}}$ to be zero\\
    			Remove node $\kappa_{0}^{\hat{t}}$ and its arcs from the graph of SP$_{h\hat{i}}$\label{RmvNdofGrph3}
    		}
    		\Else
    		{
    			Fix both $\Omega_{h\hat{i}\hat{t}}$ and $\Gamma_{h\hat{i}\hat{t}}$ to be one\\
    			Remove all the nodes in time slot $\hat{t}$ except node $\kappa_{0}^{\hat{t}}$ and their arcs from the graph of SP$_{h\hat{i}}$ \label{Round2}
    		}
    	}
    	\ElseIf{$\exists \Gamma_{hit} \notin \mathbb{Z}$ $(i \in \mathcal{I}, t \in \mathcal{T})$\label{Round3}}
    	{
    		\For{$i \in \mathcal{I}$, $t \in \mathcal{T}$ \label{Freeze3}}
    		{
    			\If{$\Gamma_{hit} = 1$}
    			{
    				Fix $\Gamma_{hit}$ to be one\\
    				Remove all the $\theta$-nodes in time slot $t$ and their arcs from the graph of SP$_{hi}$   \label{Freeze4}
    			}
    		}
    		\For{$t \in \mathcal{T}$}
    		{
    			Compute $C_{ht}'$\\
    		}
    	
    		$(h, \hat{i}, \hat{t}) \leftarrow \arg\min \{\Delta_{hit}| (\Delta_{hit} \!=\! \min\{\Gamma_{hit}, 1 - \Gamma_{hit} \}) \wedge (\Gamma_{hit} \notin \mathbb{Z} ) \wedge (i \in \mathcal{I}, t \in \mathcal{T} )  \}$\\
    		\uIf{$(\Gamma_{h \hat{i} \hat{t}} < 0.5) \vee (s_{\hat{i}} > \bar{C}_{h\hat{t}})$ \label{Round4}}
    		{
    			Fix $\Gamma_{h \hat{i} \hat{t}}$ to be zero\\
    			Remove all the $\kappa$-nodes in time slot $\hat{t}$ and their arcs from the graph of SP$_{h\hat{i}}$ \label{RmvNdofGrph6}
    		}
    		\Else
    		{
    			Fix $\Gamma_{h \hat{i} \hat{t}}$ to be one\\
    			Remove all the $\theta$-nodes in time slot $\hat{t}$ and their arcs from the graph of SP$_{h\hat{i}}$ \label{Round5} 
    		}
    	}
    }
    
    \For{$h \in \mathcal{H}$, $t \in \mathcal{T}$ \label{Remove1}} 
    {
    	Compute $\bar{G}_{ht}$ and $\bar{C}_{ht}$\\
	    Remove all incompatible columns from $ \mathcal{L}^\prime_{hi} \forall i \in \mathcal{I}$ and the corresponding $\chi$-variables\label{Remove2}\\
    }
    
    \Return {$ \mathcal{L}^\prime_{hi}$, $\forall h \in \mathcal{H}, i \in \mathcal{I}$}\\
\end{algorithm} 

\subsection{Repeated Column Generation Algorithm} \label{Sec:AD:Subsec:RCGA}

After a rounding operation, there may exist additional columns that will improve the RMP. Therefore we perform our column generation algorithm and rounding algorithm alternately to obtain an integer solution. After the alternative process terminates, all the $\Omega$ and $\Gamma$-variables are integers, and we can obtain an integer solution $\boldsymbol{\chi}^*$ by Theorem \ref{theorem 3}. Once integer $\boldsymbol{\chi}^*$ is found, the corresponding optimal values of binary $y$-variables are determined in constant time $\mathcal{O}(|\mathcal{R}^{\rm m}|)$.

We name the overall algorithm repeated column generation algorithm (RCGA), summarized in Algorithm \ref{al:RCGA}. Note that Steps \ref{RmvNdofGrph1}, \ref{RmvNdofGrph2}, \ref{RmvNdofGrph3}, \ref{Round2}, \ref{Freeze4}, \ref{RmvNdofGrph6}, and \ref{Round5} of Algorithm \ref{al:RA} are in fact for RCGA. These steps remove some nodes and their arcs in the graph of SPs such that the columns generated by subsequent SPs comply to the rounding decisions. As there are $H$ cache servers, $I$ contents, and $T$ time slots, and at least $H$ elements of the $\Omega$-variables or the $\Gamma$-variables become integer in each iteration, the process is executed in at most $I \times T$ iterations.

\begin{algorithm}[tbp]
    \caption{The framework of RCGA} \label{al:RCGA}
    \KwIn{$\mathcal{H}$, $\mathcal{I}$, $\mathcal{T}$, $C_h$, $G_h$, $\forall h \in \mathcal{H}$, $s_i$, $f_i(a)$, $\forall i \in \mathcal{I}$} 
    \KwOut{$\boldsymbol{\chi}^*$, $\boldsymbol{y}^*$, $ \mathcal{L}^\prime_{hi}$, $\forall h \in \mathcal{H}, i \in \mathcal{I}$}   
    Apply Algorithm \ref{al:cg} to obtain $\boldsymbol{\chi}^*$\\
    \While{$\boldsymbol{\chi}^* \notin \boldsymbol{\mathbb{Z}}^{|\mathcal{L}^\prime_{hi}|}$}
    {
    	Apply Algorithm \ref{al:RA}\label{RARA}\tcp{rounding operation}
    	Apply Algorithm \ref{al:cg} to obtain $\boldsymbol{\chi}^*$
    }
    Obtain $\boldsymbol{y}^*$ when $\boldsymbol{\chi}^*$ is given\tcp{easy to solve}
    \Return {$\boldsymbol{\chi}^*$, $\boldsymbol{y}^* $, $ \mathcal{L}^\prime$}\\
\end{algorithm} 


\section{Performance Evaluation} \label{Sec:evaluation}

\subsection{Preliminaries}
In this section, we show performance results of RCGA. We consider MCSP instances of both three and sevel cells. The requests for contents follow a binomial distribution $B(I, 0.5)$. Let $\rho^m \in (0,1)$ be the percentage of MCRs. In our simulation, the MCRs have either two or three candidate cache servers. Denote the ratio between the number of MCRs with three candidate servers to that of two candidate servers by $\rho^{\rm tt}$. We will present the numerical results for various values of $\rho^m$ and $\rho^{\rm tt}$. The cache capacity is set to be $C_h = \frac{1}{2}\sum_{i\in \mathcal{I}} s_i$ $(\forall h \in \mathcal{H})$. We are more interested in the impact of backhaul capacity as it is a bottleneck for AoI, so we set it to be $G_h = G =  \rho^b \sum_{i\in \mathcal{I}} s_i$ $(\forall h \in \mathcal{H})$, and vary $\rho^b$ in interval $[0.1, 0.5]$. We take $f_i(a) = \mathrm{e}^a $ $(a \in  \mathbb{N})$ as our AoI cost function, and the function is in fact the reciprocal of an AoI utility function in \cite{5062058}. In addition, we set $\alpha = 11$ and $\beta = 1$. The size of contents is uniformly generated within $[1,10]$. 

\subsection{Benchmarks}
In our simulation, we use three benchmark schemes.

\subsubsection{PBA}
{
    We take the cost obtained by a heuristic algorithm named Popularity-based Algorithm (PBA) \cite{6708492} as a baseline scheme. In PBA, a content with many requests has higher priority of being cached and updated, and the cached contents will be subject to update when its AoI cost has reached half of its download cost. If the cache or backhaul capacity does not allow a content to be cached or updated, PBA will consider the next content by the prioritization.
}
\subsubsection{OPT}
{
    This baseline scheme is the optimum of ILP \eqref{formulation} obtained by Gurobi optimizer \cite{GUROBI}. With this benchmark, we can accurately measure the deviation of RCGA and PBA from the optimum. However, it is computationally difficult to obtain global optimum for the 7-cell instances, so we use it for evaluation only for 3-cell instances. 
}
\subsubsection{LB}
{
    We also take the LB obtained by performing column generation without rounding as a reference value. It is a valid comparison because the deviation with respect to the global optimum will never exceed the deviation from the LB. We will see that, numerically, using the LB remains accurate in gauging the gap from the optimum.
}

\pgfplotsset{width=1\textwidth, height=4.4 cm, compat=1.9}

\begin{figure*}
	\centering
    \begin{minipage}{0.48\textwidth}
        \begin{center}
    		\begin{tikzpicture}
        		\begin{axis}[
        			scaled y ticks=base 10:-3,
        		    ylabel={Cost (3-cell)},
        		    xmin=25, xmax=150,
        		    ymin=0, ymax=3500,
        		    xtick={25, 50, 75, 100, 125, 150},
        		    legend pos=north west,
        		    grid style=densely dashed,
        		    tick label style={font=\scriptsize},
        		    label style={font=\small},
        		    legend style={font=\scriptsize},
        		]
        		
            		\addplot[ color=red, mark=square, line width=0.8pt]     
            		coordinates { (25, 750) (50, 1050) (75, 1180) (100, 1250) (125, 1430) (150, 1480) };
            		\addplot[ color=blue, mark=o, mark options={solid}, line width=0.8pt]     
            		coordinates { (25, 850) (50, 1430) (75, 1550) (100, 1950) (125, 2380) (150, 3100) };
            		
            		\addplot[ color=black, mark=triangle, densely dashed, mark options={solid}, line width=0.8pt]
            		coordinates { (25, 730) (50, 1030) (75, 1160) (100, 1200) (125, 1380) (150, 1400) };
            		
            		\legend{RCGA, PBA, OPT}
        		
        		\end{axis}
    		\end{tikzpicture}
    		
    		\begin{tikzpicture}
        		\begin{axis}[
        			scaled y ticks=base 10:-3,
        		    xlabel={The number of contents $I$},
        		    ylabel={Cost (7-cell)},
        		    xmin=50, xmax=300,
        		    ymin=0, ymax=18000,
        		    xtick={50, 100, 150, 200, 250, 300},
        		    legend pos=north west,
        		    grid style=densely dashed,
        		    tick label style={font=\scriptsize},
        		    label style={font=\small},
        		    legend style={font=\scriptsize},
        		]
        		
            		\addplot[ color=red, mark=square, line width=0.8pt]     
            		coordinates { (50, 2921) (100, 4253) (150, 4985) (200, 5229) (250, 6384) (300, 6711) };
            		\addplot[ color=blue, mark=o, mark options={solid}, line width=0.8pt]     
            		coordinates { (50, 3500) (100, 5500) (150, 7000) (200, 8200) (250, 11200) (300, 16500) };
            		
            		\addplot[ color=gray, mark=triangle, densely dashed, mark options={solid}, line width=0.8pt]
            		coordinates { (50, 2821) (100, 4133) (150, 4785) (200, 5029) (250, 6184) (300, 6511) };
            		
            		\legend{RCGA, PBA, LB}
        		
        		\end{axis}
    		\end{tikzpicture}
            \captionsetup{font=footnotesize}
            \captionof{figure}{The average cost as function of the number of contents $I$ when $T = 12$, $G = 0.3 \sum_{i\in \mathcal{I}} s_{i}$, $\rho^{m} = 40\%$, and $\rho^{\rm tt} = 1:1$. The numbers of requests $R$ in the 3-cell and 7-cell instances are $500$ and $2000$, respectively.} \label{fig:1}
        \end{center}
    \end{minipage}
    \hspace{.15in}
    \begin{minipage}{0.48\textwidth}
        \begin{center}
    		\begin{tikzpicture}
        		\begin{axis}[
        			scaled y ticks=base 10:-3,
        		    ylabel={Cost (3-cell)},
        		    xmin=200, xmax=700,
        		    ymin=600, ymax=2700,
        		    xtick={200, 300, 400, 500, 600, 700},
        		    ytick={0, 1000, 2000, 3000},
        		    legend pos=north west,
        		    grid style=densely dashed,
        		    tick label style={font=\scriptsize},
        		    label style={font=\small},
        		    legend style={font=\scriptsize},
        		]
        		
            		\addplot[ color=red, mark=square, line width=0.8pt]     
            		coordinates { (200, 730) (300, 920) (400, 1180) (500, 1250) (600, 1450) (700, 1580) };
            		
            		\addplot[ color=blue, mark=o, mark options={solid}, line width=0.8pt]     
            		coordinates { (200, 1430) (300, 1620) (400, 1780) (500, 1950) (600, 2350) (700, 2580) };
            		
            		\addplot[ color=black, mark=triangle, densely dashed, mark options={solid}, line width=0.8pt]
            		coordinates { (200, 690) (300, 880) (400, 1120) (500, 1200) (600, 1400) (700, 1530) };
            		
            		\legend{RCGA, PBA, OPT}
        		
        		\end{axis}
    		\end{tikzpicture}
    		
    		\begin{tikzpicture}
        		\begin{axis}[
        			scaled y ticks=base 10:-3,
        		    xlabel={The number of requests $R$},
        		    ylabel={Cost (7-cell)},
        		    xmin=500, xmax=3000,
        		    ymin=1000, ymax=11000,
        		    xtick={500, 1000, 1500, 2000, 2500, 3000},
        		    legend pos=north west,
        		    grid style=densely dashed,
        		    tick label style={font=\scriptsize},
        		    label style={font=\small},
        		    legend style={font=\scriptsize},
        		]
        		
            		\addplot[ color=red, mark=square, line width=0.8pt]     
            		coordinates { (500, 2030) (1000, 3657) (1500, 4612) (2000, 5229) (2500, 5918) (3000, 6189) };
            		\addplot[ color=blue, mark=o, mark options={solid}, line width=0.8pt]     
            		coordinates { (500, 4600) (1000, 6100) (1500, 7500) (2000, 8200) (2500, 9000) (3000, 10000) };
            		
            		\addplot[ color=gray, mark=triangle, densely dashed, mark options={solid}, line width=0.8pt]
            		coordinates { (500, 1934) (1000, 3507) (1500, 4502) (2000, 5029) (2500, 5708) (3000, 5909) };
            		
            		\legend{RCGA, PBA, LB}
        		
        		\end{axis}
    		\end{tikzpicture}
        \captionsetup{font=footnotesize}
        \captionof{figure}{The average cost as function of the number of requests $R$ when $T = 12$, $G = 0.3 \sum_{i\in \mathcal{I}} s_{i}$, $\rho^{m} = 40\%$, and $\rho^{\rm tt} = 1:1$. The numbers of contents $I$ in the 3-cell and 7-cell instances are $100$ and $200$, respectively.} \label{fig:2}
        \end{center}
    \end{minipage}
\end{figure*}

\subsection{Rounding Strategy Performance}

In Section \ref{Sec:AD:Subsec:RS}, we claimed that a naive strategy that directly performs rounding on the $\chi$-variables might not be satisfactory. Table \ref{tab:rounding} shows a comparison of the solutions obtained by the naive rounding strategy (NRS) and RCGA for 3-cell and 7-cell scenarios with $100$ instances each. Success rate represents the percentage instances for which rounding leads to feasible solutions. It can be seen that NRS has a much lower success rate than RCGA. Column Gap in the table is the average gap of the feasible solutions from the optimum/LB. Obviously, RCGA is superior to NRS. 

\begin{table}[h]
    \caption{\label{tab:rounding}Comparison between NRS and RCGA}
    \begin{center}
        \begin{tabular}{*{4}{c}}
            \toprule
            \midrule
            {\bf Method} & {\bf Scale}& {\bf Success rate} & {\bf Gap} \\
            \midrule
            \multirow{2}{*}{NRS} & 3-cell & $61\%$ & $2.6\%$ \\
            & 7-cell & $48\%$ & $5.2\%$ \\
            \midrule
            \multirow{2}{*}{RCGA} & 3-cell &  $100\%$ & $0.8\%$ \\
            & 7-cell &  $100\%$ & 1$.2\%$ \\
            \midrule
            \bottomrule
        \end{tabular}
    \end{center}
\end{table}

\subsection{Overall Performance and Analysis}

Figs. \ref{fig:1} and \ref{fig:2} show the performance results for various numbers of contents and requests, respectively. Overall, the numerical results demonstrate that RCGA can provide very good solutions with at most 2.4\% deviation from the optimum/LB, and outperforms PBA significantly. 

The results in Fig. \ref{fig:1} present that the cost obtained by RCGA changes only slightly when increasing the number of contents, whereas there is a significant increase for PBA. When the number of contents increases, more relatively unpopular contents will be requested due to the binomial distribution, and these requests for result in a higher download cost for PBA. The solutions offered by RCGA are based on specific requests, so the increasing number of contents hardly impacts on RCGA when the number of requests is fixed. 

Fig. \ref{fig:2} shows the impact of the number of requests on cost. We can see that the costs obtained by RCGA and PBA both increase almost linearly with the number of requests. Thus the relative gap of the solutions offered by both PBA and RCGA from the optimum/LB actually decreases. The reason is that with larger number of requests, the cached contents are also requested more often. 

\subsection{Impact of MCRs}

We are curious about the impact of the proportion of MCRs on RCGA in the aspects of the solution time and the optimality gap from the LB, thus we examine RCGA for the 7-cell instances with different values of $\rho^{\rm m}$ and $\rho^{\rm tt}$. The results are shown in Table \ref{tab:MCRs}. We can see that RCGA can efficiently provide good solutions that are at most $1.59\%$ from the LB. Also, the results in Table \ref{tab:MCRs} demonstrate that the solution time of RCGA is hardly impacted by the number of MCRs. This is because the proportion of MCRs affects the scale of RMP but not the SP, and the time of solving SPs dominates the overall solution time in RCGA.

\begin{table}[h]
    \caption{\label{tab:MCRs} Impact of the proportion of MCRs}
    \begin{center}
        \begin{threeparttable}[b]
            \begin{tabular}{*{4}{c}}
                \toprule
                \midrule
                \quad $\boldsymbol{\rho}^{\rm\bf m}$ \quad \quad&  $\boldsymbol{\rho}^{\rm\bf tt}$\quad & \quad{\bf Time (s)} \quad&{\bf Gap} \quad\\
                \midrule
                \multirow{3}{*}{$20\%$} &  $3:1$  & $4.82$ & $1.20\%$ \\
                           &  $1:1$ & $3.23$ & $0.97\%$ \\
                           &  $1:3$ & $5.38$ & $1.59\%$\\
                \midrule
                \multirow{3}{*}{$40\%$}& $3:1$ & $6.90$ & $0.33\%$ \\
                           &  $1:1$ & $7.22$ & $1.38\%$ \\
                           &  $1:3$ & $3.36$ & $0.51\%$\\
                \midrule
                \multirow{3}{*}{$60\%$} &  $3:1$ & $4.36$ & $1.23\%$ \\
                           &  $1:1$ & $7.36$ & $0.43\%$ \\
                           &  $1:3$ & $8.36$ & $0.52\%$\\
                \midrule
                \bottomrule
            \end{tabular}
            
            \begin{tablenotes}
            	\footnotesize
            	\item[$\dagger$] $R = 5000$, $T = 12$, and $G = 0.3 \sum_{i\in \mathcal{I}} s_{i}$.
            \end{tablenotes}
        \end{threeparttable}
    \end{center}
\end{table}

\subsection{Impact of Backhaul Capacity}

\pgfplotsset{width=.45\textwidth, height=4cm, compat=1.9}

\begin{figure}[t]
    \begin{tikzpicture}
        \centering
        \begin{axis}[
        		xtick style={draw=none}, ytick pos=right, 
                ybar, axis on top,
                tick label style={font=\scriptsize},
                xtick align = inside,
        		ylabel style={align=center},
                ylabel={Percentage (\%)},
                ymin = 0, ymax = 100,
                symbolic x coords={0.1, 0.15, 0.2, 0.25, 0.3},
                xticklabels=\empty,
                enlarge x limits = 0.15,
               	legend style = {at = {(0.98, 0.98)},
        	    legend pos = north east, 
            	legend columns = 1},
        		legend image code/.code={\draw [#1] (0cm,-0.1cm) rectangle (0.2cm,0.15cm); },
        		bar width=.016\textwidth,
        	    tick label style={font=\scriptsize},
        	    label style={font=\small},
        	    legend style={font=\scriptsize},
                minor x tick num=1,
                xminorgrids,
                minor tick length=0,
                grid style={dashed},
        ]
            \addplot  coordinates { (0.1, 55) (0.15, 45) (0.2, 36) (0.25, 34) (0.3, 30) };\label{bar1}
            \addplot  coordinates { (0.1, 23) (0.15, 30) (0.2, 33) (0.25, 37) (0.3, 42) };\label{bar2}
            \addplot  coordinates { (0.1, 20) (0.15, 25) (0.2, 31) (0.25, 29) (0.3, 28) };\label{bar3}
        	\legend{Download cost, Update cost, AoI cost}
        \end{axis}
        
		\begin{axis}[
    		xtick style={draw=none}, ytick pos=left, 
			scaled y ticks=base 10:-5,
		    ylabel={Cost (PBA)},
		    xticklabels=\empty,
		    ymin = 0, ymax=460000,
		    ytick = {0, 100000, 200000, 300000, 400000},
            enlarge x limits = 0.15,
		    legend pos=north west,
		    grid style=densely dashed,
		    tick label style={font=\scriptsize},
		    label style={font=\small},
		    legend style={font=\scriptsize},
		]
    		\addplot[ color=black, mark=o, line width=0.8pt]     
    		coordinates { (0.1, 340000) (0.15, 240000) (0.2, 180000) (0.25, 160000) (0.3, 150000)}; \addlegendentry{Cost}
		\end{axis}
    \end{tikzpicture}
    \begin{tikzpicture}
        \centering
        \begin{axis}[
        		xtick style={draw=none}, ytick pos=right, 
                ybar, axis on top,
                tick label style={font=\scriptsize},
                xtick align=inside,
                xlabel={The scaling factor of backhaul capacity $\rho^b$},
        		ylabel style={align=center},
                ylabel={Percentage (\%)},
                ymin = 0, ymax = 100,
                symbolic x coords={0.1, 0.15, 0.2, 0.25, 0.3},
                enlarge x limits = 0.15,
              	legend style = {at = {(0.98, 0.98)},
        	    legend pos = north east,
        	    anchor = north west, 
            	legend columns = -1},
            	legend image code/.code={\draw [#1] (0cm,-0.1cm) rectangle (0.2cm,0.15cm); },
            	bar width=.016\textwidth,
                tick label style={font=\scriptsize},
                label style={font=\small},
                legend style={font=\scriptsize},
                minor x tick num=1,
                xminorgrids,
                minor tick length=0,
                grid style={dashed},
        ]
            \addplot  coordinates { (0.1, 18) (0.15, 8) (0.2, 3) (0.25, 0) (0.3, 0) };
            \addplot  coordinates { (0.1, 42) (0.15, 54) (0.2, 66) (0.25, 80) (0.3, 84) };
            \addplot  coordinates { (0.1, 40) (0.15, 38) (0.2, 31) (0.25, 20) (0.3, 16) };
        \end{axis}
		\begin{axis}[
    		xtick style={draw=none}, ytick pos=left, 
			scaled y ticks=base 10:-5,
		    ylabel={Cost (RCGA)},
		    xticklabels=\empty,
		    ymin = 0, ymax=460000,
		    ytick = {0, 100000, 200000, 300000, 400000},
            enlarge x limits = 0.15,
		    legend pos=north east,
		    grid style=densely dashed,
		    tick label style={font=\scriptsize},
		    label style={font=\small},
		    legend style={font=\scriptsize},
		]
    		\addplot[ color=black, mark=o, line width=0.8pt]     
    		coordinates { (0.1, 180000) (0.15, 100000) (0.2, 58000) (0.25, 51000) (0.3, 50000)};
		\end{axis}
    \end{tikzpicture}
    \caption{For various backhaul capacity levels in a 7-cell instance, the line shows the cost (labeled on the left y-axis), and the bars show the proportions of three types of cost (labeled on the right y-axis) within the overall cost. The parameters of the instance are $I = 200$, $R = 5000$, $T = 12$, $\rho^{\rm m} = 40\%$, and $\rho^{\rm tt} = 1:1$.} \label{fig:3}
\end{figure}

In some cases, backhaul capacity will be the bottleneck resulting in a higher download cost. We evaluate a 7-cell instance under varying backhaul capacity levels. Fig. \ref{fig:3} shows both the cost and the proportions of three types of cost within the overall cost. Apparently, the cost decreases by increasing the backhaul capacity. Moreover, the proportion of download cost also decreases because more requests can be satisfied with the cached and low-AoI contents, instead of using the cloud server. There is a saturation effect on PBA when $\rho^b \geq 0.2$, however. We can see that, for PBA, the download cost always exists no matter how large the backhaul capacity is. This is because those requests for relatively unpopular contents have to be satisfied by the cloud server. For RCGA, the download cost can be decreased to zero when $\rho^b \geq 0.25$, as RCGA is request-tailored. In addition, with larger backhaul capacity, intuitively the proportion of update cost will grow at optimum, whereas in addition to the download cost, the AoI cost proportion shall go down due to higher information freshness. This is indeed the case for RCGA.


\section{Conclusion}

We have considered a content scheduling problem for multi-cell caching where the multiple cells are coupled with each other due to overlapping coverage areas. We have proved the NP-hardness of the problem. We formulated the problem as an integer liner program (ILP). The ILP provides optimal solutions, but it is not practical to large-scale instances in terms of running time. Making the use of the structure of this optimization problem, we have proposed an algorithm based on column generation. The algorithm can efficiently obtain close-optimal solutions that are within a few percent from global optimality. These observations together represent a step closer toward deploying large-scale MEC systems.



\bibliographystyle{IEEEtran}
\bibliography{mybibtex}

%
%

%
%
%

\end{document}